\documentclass[12pt]{article}
\usepackage{amsfonts}
\usepackage{amsmath, amsthm}
\usepackage{epsfig}
\usepackage{algorithm}
\usepackage{algorithmic}
\usepackage{epic}
\usepackage[a4paper]{geometry}
\usepackage{enumitem}
\usepackage{tabularx}
\usepackage{dsfont}
\usepackage{rotating}
\usepackage{lscape}
\usepackage{array}

\usepackage{amsmath,verbatim,color,amssymb,epsfig}
\usepackage{bm}
\usepackage{amsfonts}
\usepackage{epsfig}
\usepackage{changepage}
\usepackage{multirow}
\usepackage{graphicx}
\usepackage{array}
\usepackage[table]{xcolor}
\definecolor{Gray}{gray}{0.80}
\usepackage{lscape}
\usepackage{rotating}
\usepackage[round,semicolon,authoryear]{natbib}
\usepackage[normalem]{ulem}
\usepackage[colorlinks=true,urlcolor=blue,citecolor=purple,linkcolor=blue,bookmarks=true]{hyperref}
\usepackage{caption}
\usepackage{subcaption}

\def\eqx"#1"{{\label{#1}}}
\def\eqn"#1"{{\ref{#1}}}

\makeatletter % make @ act like a letter
\@addtoreset{equation}{section}
\makeatother  % make @ act like a non-letter

\def\squarebox#1{\hbox to #1{\hfill\vbox to #1{\vfill}}}
\def\boxit#1{\vbox{\hrule\hbox{\vrule\kern6pt
          \vbox{\kern6pt#1\kern6pt}\kern6pt\vrule}\hrule}}

\def\theequation{\thesection.\arabic{equation}}

\newtheorem{thm}{Theorem}%[section]

\newtheorem{cor}{Corollary}[section]
\newcolumntype{L}[1]{>{\raggedright\let\newline\\\arraybackslash\hspace{0pt}}m{#1}}
\newcolumntype{C}[1]{>{\centering\let\newline\\\arraybackslash\hspace{0pt}}m{#1}}
\newcolumntype{R}[1]{>{\raggedleft\let\newline\\\arraybackslash\hspace{0pt}}m{#1}}

\usepackage{makecell}
\usepackage{booktabs,multirow}
\usepackage{tikz}
\usetikzlibrary{shapes}

\usepackage{threeparttable}
\usepackage{xurl}

\usepackage{amsthm}

\newtheorem{proposition}{Proposition}

\theoremstyle{remark}

\begin{document}
\thispagestyle{empty}
%% Title
\baselineskip=25pt
\begin{center}
% {\Large \bf  Efficient Treatment Effect Estimation in Longitudinally Conducted Decentralized Clinical Trials}
{\Large \bf On the Calibration of Bayesian Success Criteria and Operating Characteristics for Clinical Trials}
\end{center}
% \vspace{1mm}
%% Author
\begin{center}
{\bf Peng Yang$^{1}$, Li Wang$^{2}$, Ying Yuan$^{1,*}$}
\end{center}

\begin{center}
$^{1}$Department of Biostatistics, The University of Texas MD Anderson Cancer Center, Houston, TX 77030\\
$^{2}$AbbVie Inc., North Chicago, IL, USA\\
{*Email: yyuan@mdanderson.org}
%\vspace{2mm}
\end{center}

\noindent\textbf{Abstract} $\quad$
Recently, the U.S. Food and Drug Administration (FDA) released draft guidance \citep{FDA2026} signaling a paradigm shift that facilitates the use of Bayesian methodology as the primary analysis and decision framework for drug approval. The cornerstone and fundamental challenge of this framework is the specification and calibration of Bayesian success criteria to control decision errors, ensuring reliable clinical and regulatory outcomes. In this work, we systematically investigate various Bayesian decision-error metrics, their theoretical interrelationships, and their alignment with conventional Frequentist counterparts. This investigation provides critical theoretical insights and practical guidance on calibrating Bayesian success criteria and operating characteristics to ensure robust decision-making and the integrity of public health decisions. We illustrate this framework using a clinical trial evaluating revascularization strategies for cardiogenic shock. A Shiny application will be available at \url{www.trialdesign.org} to assist sponsors and regulators in evaluating calibration strategies consistent with recent regulatory perspectives.

%Under this framework, trial success is typically defined by a posterior probability threshold. While one approach is to calibrate this threshold to strict Frequentist Type I error limits, alternative approaches based on direct posterior interpretation are gaining traction \citep{EMA2026}. However, applying these flexible criteria requires a rigorous demonstration of error control to ensure reliable statistical and clinical decisions. To address this critical gap, we systematically investigate various Bayesian decision-error metrics, their theoretical interrelationships, and their alignment with conventional Frequentist counterparts. Through this framework, we elucidate the fundamental trade-offs inherent to strict Type I error calibration, and we further illustrate these dynamics using a clinical case study evaluating revascularization strategies for cardiogenic shock. Ultimately, this work provides theoretical insights and practical guidance to ensure Bayesian clinical trials yield robust conclusions that satisfy rigorous regulatory standards. To facilitate implementation, a Shiny application is available at \url{www.trialdesign.org} to help sponsors and regulators evaluate Bayesian designs consistent with recent regulatory perspectives.

\bigskip

\noindent{KEY WORDS:} Bayesian design; Clinical trials; Error control; Operating characteristics.

\baselineskip=25pt

\newpage
\setcounter{page}{1} 

\section{Introduction} \label{sec:intro}

Clinical trials are conducted to support regulatory and clinical decision-making regarding the efficacy of new therapeutic interventions \citep{ICH1998E9, pocock1983clinical, FDA1998}. A central element of trial design is the specification of a predefined success criterion, which determines whether the accumulated evidence is sufficient to declare a treatment effective and approve the drug. Historically, the frequentist approach has played a dominant role in confirmatory settings, where statistical significance is typically defined by a $p$-value less than 0.05 for a two-sided test (or 0.025 for a one-sided test). In contrast, Bayesian methods have traditionally been relegated to supplementary analyses.

% In confirmatory settings, study designs are traditionally calibrated using Frequentist operating characteristics, most notably the Type~I error rate and statistical power.

In January 2026, the U.S. Food and Drug Administration (FDA) released draft guidance \citep{FDA2026}, signaling a paradigm shift that facilitates the use of Bayesian methodology as the primary analysis framework for drug approval. As the guidance notes, a common Bayesian success criterion is based on the posterior probability that the treatment effect exceeds a pre-specified, clinically meaningful margin:
\begin{equation} \label{eqn:decision_rule}
\Pr(\theta > \delta \mid \text{data}) > c, \qquad c\in(0,1),
\end{equation}
where $\theta$ represents the true treatment effect, $\delta$ is the minimal clinically meaningful margin established by clinical experts, and $c$ is the posterior probability threshold. This success criterion has been widely adopted in Bayesian clinical trial designs, particularly for monitoring futility, superiority, and safety \citep{thall1994practical, Zhou2017, Liu2015}, as well as for formal regulatory decision-making \citep{Khanna2022}. While alternative metrics are available, most notably Bayesian predictive probability \citep{Lee2008} and the Bayes factor \citep{Kass1995, JohnsonCook2009}, this work focuses on the posterior probability criterion. This choice is motivated by its intuitive interpretation and its central role in the ``evidence-to-decision"  framework favored by modern regulatory guidelines.

The fundamental statistical challenge in implementing the Bayesian (posterior) success criterion (\ref{eqn:decision_rule}) for regulatory and clinical decision-making lies in the determination of the threshold $c$.  The FDA guidance provides two general approaches to calibrate the value of $c$  \citep{FDA2026}. The first approach is to calibrate $c$ to control the Frequentist Type I error rate. In this setting, although the Bayesian decision rule in \eqref{eqn:decision_rule} is utilized, the primary operating characteristics remain conventional Frequentist metrics, such as Type I error rate and power, typically evaluated via simulation. Bayesian methods therefore function primarily as modeling or computational tools within an overall Frequentist inferential framework. To date, this remains the most prevalent approach in confirmatory Bayesian trials.

The second approach leverages Bayesian frameworks to calibrate $c$, drawing on the direct interpretation of posterior probabilities, benefit-risk assessments, or decision-theoretic methods. For instance, setting $c = 0.95$ implies that a trial is deemed successful if there is at least a 95\% posterior probability of effectiveness. While this approach offers significant flexibility, it must not compromise the integrity of clinical and regulatory decision-making. It remains vital to ensure that, for any value of $c$ derived from a specific Bayesian framework, the resulting decision errors, which are not restricted to the Frequentist Type I error rate,  are maintained at levels pre-agreed upon by the sponsor and regulatory authorities. This alignment is a core component of the robust operating characteristics required for regulatory submissions \citep{FDA2026}. 
%Mathematically, calibrating $c$ based on the direct interpretation of posterior probabilities or benefit-risk trade-offs can be reframed as an equivalent calibration to control specific decision errors, such as the Probability of Incorrect Decision (PID), at a pre-specified target level.

%This approach offers significant flexibility, but it 
%%This approach offers significant flexibility, particularly when Frequentist Type I error calibration is problematic—such as when using informative priors to borrow information. 
%%In such cases, enforcing strict Type I error control can inadvertently penalize the very information being incorporated, undermining the efficiency of the design. 
%%However, this flexibility 
%must not compromise the quality of clinical and regulatory decision-making. It remains critical to ensure that, for any selected value of $c$ derived from a specific Bayesian framework, the resulting decision errors are maintained at levels pre-agreed upon by the sponsor and regulatory authorities.  This alignment is a core component of the robust operating characteristics required for clinical trial submissions \citep{FDA2026}. Thus, mathematically, calibrating $c$ based on direct interpretation of posterior probabilities, benefit-risk assessments, or decision-theoretic methods using can be equivalently framed as calibration to control a certain decision errors (e.g., probability of incorrect decision as described later) at a certain level. 

There has been a long-standing debate regarding whether the primary goal of a clinical trial should be statistical inference or binary decision-making \citep{Spiegelhalter1994, Lindley1994}. While one might argue that the role of statistical analysis is to summarize the strength of evidence, fully provided by the posterior distribution, the practical reality of regulation necessitates a binary ``approve or reject" decision. Consequently, the rigorous control of decision errors remains vital. This view has been highlighted in both the 2026 FDA Draft Guidance and the European Medicines Agency (EMA) Concept Paper on the use of Bayesian methods in clinical development \citep{EMA2026, FDA2026}. Both regulatory bodies emphasize that while Bayesian methods offer a rich summary of evidence, their application in confirmatory settings requires a clear demonstration of error control to ensure the integrity of the public health decision. Accordingly, this paper focuses on the decision-making perspective, rather than the inferential summary. For discussion on how Bayesian decision rules, particularly those involving multiple interim looks, impact statistical inference, see \citet{Rosenbaum84} and \citet{liu2025repeated}.

The objective of this paper is to address this critical gap by systematically investigating various Bayesian decision-error metrics, their theoretical interrelationships, and their alignment with conventional Frequentist Type I error calibration. Through this investigation, we aim to elucidate the pros and cons of competing methodologies—specifically, those requiring strict Type I error calibration versus those calibrated to Bayesian decision-error metrics. This work provides both theoretical insights and practical guidance to ensure that Bayesian clinical trial designs yield robust and reliable decisions that meet the rigorous standards of regulatory authorities.

The remainder of this paper is organized as follows. In Section~\ref{sec:method}, we introduce the Bayesian decision rule and formalize the operating characteristics in terms of decision reliability and error rates. In Section~\ref{sec:OC_connection}, we establish analytical connections between these Bayesian operating characteristics and their conventional Frequentist counterparts, characterizing their fundamental relationships. We then evaluate these Bayesian metrics within the context of a clinical case study for cardiogenic shock in Section~\ref{sec:casestudy}. Finally, we conclude with a discussion in Section~\ref{sec:Discussion}.

\section{Methods} \label{sec:method}

\subsection{Success criterion and decision framework}

Consider a superiority trial designed to evaluate the efficacy of an experimental treatment relative to a prespecified reference, informed by historical or external data in a single-arm design or by a concurrent control in a two-arm randomized controlled trial (RCT). For binary endpoints, treatment-effect estimand $\theta$ may correspond to the response probability in a single-arm trial and to the risk difference or ratio between treatment arms in an RCT. For continuous endpoints, $\theta$ may represent the mean outcomes in the single-arm setting or the difference in mean outcomes between arms. For time-to-event endpoints, $\theta$ may represent  the log hazard ratio comparing the experimental treatment with control. Without loss of generality, we assume that larger values of $\theta$ correspond to greater treatment efficacy.

Given the clinical margin $\delta$, the true treatment effectiveness is defined by two complementary states: effectiveness ($\mathcal{E}$) and ineffectiveness ($\bar{\mathcal{E}}$), where:
\[
\mathcal{E}=\{\theta>\delta\}, \qquad 
\bar{\mathcal{E}}=\{\theta\le \delta\}.
\]
%This formulation accommodates both single-arm and RCT designs. 
%In a single-arm trial, the margin $\delta$ is interpreted relative to a clinically meaningful reference value, typically informed by historical trials, external control data, or subject-matter considerations. In contrast, in an RCT, $\delta=0$ corresponds to no treatment difference between arms, and no additional reference parameter is required. 
Let $\mathcal{D}$ denote the observed trial data. We declare trial success if $\Pr(\theta > \delta \mid \mathcal{D}) > c$. Accordingly, we define the decision indicator as follows:
\begin{equation} \label{eqn:decision_indicator}
\mathcal{S}(\mathcal{D}; c)
= \mathbb{I}\!\left\{\Pr(\theta>\delta \mid \mathcal{D})>c\right\},
\end{equation}
where $\mathcal{S}(\mathcal{D};c) = 1$ indicates success and $\mathcal{S}(\mathcal{D};c) = 0$ indicates failure. From a decision-making perspective, such rules naturally induce a $2\times2$ structure that cross-classifies the latent treatment state (effective vs.\ ineffective) against the trial decision (success vs.\ failure), as shown in Table~\ref{tbl:2by2}. Within this framework, a decision may be correct, resulting in a true positive (TP) or true negative (TN), or erroneous, resulting in a false positive (FP) or false negative (FN).

To quantify these decision-error rates under the Bayesian paradigm, where $\theta$ is a random variable,  it is critical to distinguish between the analysis prior $\pi_{\mathrm{a}}(\theta)$ and the design (or sampling) prior $\pi_{\mathrm{d}}(\theta)$ \citep{FDA2026}. The analysis prior represents the distribution assumed by the investigator to conduct posterior inference and calculate $\Pr(\theta>\delta \mid \mathcal{D})$ for decision making. Common choices include noninformative or vague priors. % such as $\text{Beta}(1, 1)$ for binary endpoints or $N(0, 10^6)$ for continuous endpoints. 
In contrast, the design prior represents the distribution that governs the underlying data-generating process, from which $\theta$ is drawn and the subsequent trial data $\mathcal{D}$ are generated. 

While the analysis prior is a known component of the decision rule, the design prior is typically unknown and must be hypothesized during trial planning and operating characteristics assessment. Relative to the clinical margin $\delta$, the design prior induces the prior prevalence of effective and ineffective treatments, denoted by $\gamma_1 = \Pr(\mathcal{E})=\Pr(\theta>\delta)$ and $\gamma_0 = \Pr(\bar{\mathcal{E}})=\Pr(\theta\le \delta)$, respectively. 

It is important to note that in the standard approach of calibrating the threshold $c$ to control the Frequentist Type I error, the design prior becomes irrelevant. This occurs because the Frequentist Type I error is defined as $\alpha(c) = \Pr\{\mathcal{S}(\mathcal{D}; c) = 1 \mid \theta = \delta\}$, which is strictly conditioned on the point null hypothesis $\theta=\delta$. Under this framework, the distribution of $\theta$ is collapsed to a single point ($\delta$), effectively bypassing the influence of the design prior during the calibration process. However, because the decision indicator $\mathcal{S}(\mathcal{D}; c)$ is governed by the posterior probability, $\alpha(c)$  depends on the specification of the analysis prior.

%========================================================
\subsection{Operating characteristics} \label{sec:oc_metrics}
%========================================================
\noindent In this section, we define the key operating characteristic metrics used to evaluate decision reliability and the error risks. While our discussion focuses on Bayesian success criterion derived from the posterior distribution as defined in (\ref{eqn:decision_rule}), these metrics are general and applicable to any probabilistic decision rule.

We frame these metrics within a unified $2 \times 2$ decision framework (Table~\ref{tbl:2by2}), which allows us to quantify not only traditional ``pre-trial" operating characteristics, such as power and Type I error, but also the ``post-trial" reliability of trial conclusions. This dual perspective is essential for both sponsors and regulators to understand the real-world implications of a chosen posterior threshold $c$, particularly regarding the probability that a trial's declaration of success or failure accurately reflects the underlying treatment effect.

\bigskip

\noindent\textbf{Bayesian power (BP)}:  defined as the marginal probability of meeting the success criterion, averaged over the design prior distribution  \citep{spiegelhalter2004bayesian}:
\begin{align} \label{eqn:bayesian_power}
\beta_B(c) \equiv\;& \Pr(\text{success})
=   \int \Pr\{\mathcal{S}({\cal D};c)=1\mid \theta\}\,
\pi_{\mathrm{d}}(\theta)\,d\theta .
\end{align}
BP represents the marginal predictive probability that the trial will declare success across the full range of treatment effects considered plausible \textit{a priori} under the design prior. In the context of the $2 \times 2$ decision framework (Table~\ref{tbl:2by2}), BP corresponds to the total proportion of successful outcomes: $(\mathrm{TP} + \mathrm{FP}) / (\mathrm{TP+TN+FP+FN})$. When the design prior is a point mass at the alternative value of $\theta$, Bayesian power $\beta_B(c)$ coincides with the conventional Frequentist power.

A notable limitation of BP is that it quantifies only the propensity of the trial to declare success, rather than the reliability of that decision. This is because the integral in \eqref{eqn:bayesian_power} averages over the entire support of the design prior, encompassing both the effectiveness region ($\theta > \delta$), where success constitutes a correct decision, and the ineffectiveness region ($\theta \le \delta$), where success represents a false positive error. To address this, one may decompose the marginal probability by integrating separately over these two regions. This refinement leads to the definitions of Bayesian conditional power and the Bayesian Type I error rate, as detailed below.

\noindent\textbf{Bayesian conditional power (BCP)}: defined as the probability of meeting the success criterion given that the treatment is effective, where the expectation is taken with respect to the design prior restricted to the effectiveness region:  
\begin{align*}
\beta_{C}(c)
\equiv &
\Pr(\text{success}\mid \text{effective}) 
=  \int_{\theta > \delta } \Pr\{\mathcal{S}({\cal D};c)=1\mid \theta\}\,
\pi_{\mathrm{d}}(\theta \mid \theta > \delta)\,d\theta .
\end{align*}
BCP represents the probability that the trial declares success when the treatment is truly effective, as characterized by the design prior. In the $2 \times 2$ decision framework (Table~\ref{tbl:2by2}), BCP corresponds to the True Positive Rate: $\mathrm{TP} / (\mathrm{TP} + \mathrm{FN})$. BCP serves as a Bayesian counterpart to Frequentist power; however, while Frequentist power is typically conditioned on a single point-mass alternative (e.g., $\theta = \theta_A$, where $\theta_A > \delta$), BCP incorporates the uncertainty of the treatment effect by averaging the success probability over the entire effectiveness region defined by the design prior. 
 
\noindent\textbf{Bayesian Type I error rate:} defined as the probability that the trial incorrectly declares success, conditioned on the treatment being ineffective. The expectation is taken with respect to the design prior restricted to the ineffectiveness region:
\begin{align*}
\alpha_B(c)
\equiv
& \Pr(\text{success}\mid \text{ineffective}) = \int_{\theta \leq \delta } \Pr\{\mathcal{S}({\cal D};c)=1\mid \theta\}\,
\pi_{\mathrm{d}}(\theta \mid \theta \leq \delta)\,d\theta .
\end{align*}
The Bayesian Type I error rate represents the average risk of a false positive finding across all plausible ineffective treatment states \citep{Best2025}.
In the $2 \times 2$ decision framework (Table~\ref{tbl:2by2}), it corresponds to the False Positive Rate: $\mathrm{FP} / (\mathrm{FP} + \mathrm{TN})$. This metric serves as a Bayesian counterpart to the Frequentist Type I error rate; however, while the Frequentist approach typically conditions on a single point-mass null (e.g., $\theta = \delta$), the Bayesian Type I error rate incorporates uncertainty regarding the null state by averaging the success probability over the entire ineffectiveness region defined by the design prior. This metric is recognized in 2010 FDA guidance for the use of Bayesian statistics in medical device clinical trials as an alternative definition of Type I error for Bayesian methods \citep{FDA2010}. Furthermore, the January 2026 FDA Guidance on Bayesian Methodology recommend that ``design priors that explore pessimistic assumptions about treatment effects should also be considered" when evaluating Bayesian power, which under pessimistic scenarios is closely related to the Bayesian Type I error rate.

%--------------------------------------------------------
\noindent\textbf{Probability of incorrect decision (PID)}: defined as the probability that the treatment is ineffective given that the trial has declared success:
\begin{equation*}
\mathrm{PID}(c)
\equiv
\Pr(\text{ineffective} \mid \text{success}) = \Pr\!\left(\theta \le \delta \;\middle|\; \mathcal{S}({\cal D};c)=1\right).
\end{equation*}
The PID addresses a critical question for decision-makers: Given that the trial has met the success criterion, what is the probability that the treatment is actually ineffective? In the context of the $2 \times 2$ decision framework (Table~\ref{tbl:2by2}), the PID is  $\mathrm{FP}/(\mathrm{TP}+\mathrm{FP})$.

As emphasized in the 2026 FDA guidance \citep{FDA2026}, the PID differs fundamentally from Frequentist or Bayesian Type I error; while the latter is a pre-data measure conditioned on the treatment being ineffective ($\bar{\mathcal{E}}$), the PID is a post-data (or predictive) measure conditioned on the observed success event ($\mathcal{S}=1$). Consequently, the PID aligns more naturally with the Bayesian paradigm of conditioning on observed data, making it a useful metric for quantifying regulatory decision error. If a successful trial leads to market approval, the PID represents the probability that the approved drug is truly ineffective. Ultimately, the PID ensures that ``if we declare a drug successful, we are likely correct", whereas the Type I error ensures ``if a drug is ineffective, we are unlikely to declare it successful.'"

Mathematically, PID is equivalent to the complement of the Positive Predictive Value (PPV; \citet{Altman1994}) in diagnostic testing ($1 - \text{PPV}$) or the positive False Discovery Rate (pFDR) \citep{storey2003positive}. Similar to PPV \citep{Wacholder2004}, a defining characteristic of the PID is its intrinsic sensitivity to the design prior, specifically, the prior prevalence of effectiveness $\gamma_1$, as follows:
\begin{equation*} \label{eqn:PID_formula}
\mathrm{PID}(c) = \frac{\alpha_B(c) \gamma_0}{\beta_C(c) \gamma_1 + \alpha_B(c) \gamma_0}.
\end{equation*}

Consequently, $\mathrm{PID}(c)$ may be significantly larger or smaller than the Bayesian Type I error rate $\alpha_B(c)$ under the same cutoff $c$. While the Frequentist Type I error rate is invariant to the prevalence of effective treatments and the Bayesian Type I error rate is normalized to the ineffective region $\bar{\cal E}$, the PID measures a fundamentally distinct dimension of decision reliability. This relationship is directly analogous to the distinction between PPV and sensitivity in diagnostic screening,
% whereas Type I error rate (sensitivity's counterpart) quantifies the risk of failing the null, PID quantifies the ``post-test" probability that a positive finding is a false discovery. This 
highlighting the necessity of accounting for the design prior when evaluating the real-world reliability of positive trial results, as further elaborated in section \ref{sec:PID}. 

%--------------------------------------------------------
\noindent\textbf{False omission rate (FOR)}: defined as the probability that the treatment is effective among trials that fail to declare success:
\begin{equation*}
\mathrm{FOR}(c)
\equiv  \Pr(\text{effective}\mid \text{failure}) = \Pr\!\left(\theta>\delta \;\middle|\; \mathcal{S}({\cal D};c)=0\right).
\end{equation*}
In the $2 \times 2$ decision framework (Table~\ref{tbl:2by2}), the FOR corresponds to $\mathrm{FN}/(\mathrm{FN}+\mathrm{TN})$. This quantifies the risk of missing a beneficial treatment under the chosen decision rule and complements $\mathrm{PID}(c)$, which focuses on reliability among declared successes. In the regulatory setting, the FOR represents the percentage of drugs that fail to obtain approval despite being truly effective. Clearly, a trade-off exists between PID and FOR: setting a stringent success criterion (i.e., a larger $c$) reduces the PID but increases the FOR. An excessively high FOR diminishes the overall societal benefit of drug development by incorrectly rejecting effective therapies, thereby depriving patients of potentially life-saving treatments. 

The FOR is fundamentally linked to the diagnostic concepts of sensitivity and specificity through the design prior (see Section~\ref{sec:PID}). Mathematically, the FOR is the complement of the Negative Predictive Value (NPV) and is equivalent to the False Undiscovery Rate (FUnR) frequently utilized in multiple testing and genomics. Unlike specificity, which is invariant to the prevalence of effective treatments, the FOR is highly dependent on the design prior. Specifically, as the prior mass assigned to the effective region $\mathcal{E}$ increases, representing a higher ``pre-trial'' probability of drug efficacy, the FOR  tends to rise, even if the specificity of the design remains constant. This reinforces the importance of the design prior in evaluating the reliability of negative trial results, as it captures the post-trial probability of an erroneous negative conclusion.

For continuous and binary endpoints, the aforementioned operating characteristic metrics admit closed-form expressions, which are summarized in Table~\ref{tbl:OC_summary} and Table~\ref{tbl:OC_binary_summary} in the Supplementary Materials, respectively. 

% To illustrate their behavior, Figure~\ref{fig:oc_normal_design_prior_sd_0.15} presents these operating characteristics under a single-arm continuous-outcome setting with various normal design priors. 

%Crucially, these expressions remain broadly applicable to non-normal outcomes, such as binary or time-to-event data, because the distributions of their associated test statistics can generally be well approximated by a normal distribution via the Central Limit Theorem. Detailed derivations are provided in the Appendix. Consequently, these analytical solutions establish a robust and computationally efficient framework for evaluating decision reliability across a wide range of clinical trial designs.

% For normal endpoints, the aforementioned operating characteristic metrics admit closed-form expressions, which are summarized in Table~\ref{tbl:OC_summary}. Crucially, these expressions remain broadly applicable to non-normal outcomes, such as binary or time-to-event data, because the distributions of their associated test statistics can generally be well-approximated by a normal distribution via the Central Limit Theorem. Detailed derivations are deferred to the Appendix. Consequently, these analytical solutions establish a robust and computationally efficient framework for evaluating decision reliability across a wide variety of clinical trial designs.

\section{Relationship between operating characteristic metrics} \label{sec:OC_connection}

In this section, we examine the mathematical and practical relationships between the operating characteristic metrics defined previously. These results delineate the connection between the common Frequentist-inspired approach of calibrating the success threshold $c$ via Type I error and the Bayesian approach of selecting $c$ based on decision-reliability metrics, specifically the PID. The latter approach may appear distinct from methods based on the direct interpretation of posterior probabilities, benefit-risk assessments, or decision-theoretic utilities; however, it effectively encompasses these frameworks through the following practical consideration. When adopting a subjective derivation of $c$—such as those rooted in clinical utility or the direct interpretation of posterior probabilities—sponsors are almost invariably required to demonstrate that the resulting threshold maintains decision errors within pre-specified nominal bounds. This is a core component of the robust operating characteristics mandated for regulatory submissions \citep{FDA2026}. Since the success criterion in (\ref{eqn:decision_rule}) relies on a single tuning parameter, $c$, the process of calibrating this threshold (whether via benefit-risk trade-offs or direct probability interpretation) while maintaining error control can be straightforwardly reframed as an equivalent calibration designed to bound a specific decision error, such as the PID, at a target level.
%As noted in the introduction, while current FDA guidance \citep{FDA2026} supports selecting $c$ through a direct interpretation of the posterior distribution, researchers still face the challenge of determining a specific threshold  value. Whether $c$ should be set at $0.80$, $0.90$, or $0.95$ remains a critical decision point. While this choice is informed by clinical characteristics—such as the target patient population and the effectiveness of the standard of care—it must ultimately be anchored to one or more formal operating characteristic criteria. Indeed, the FDA guidance underscores that regardless of the method used to choose $c$, the resulting design must demonstrate desirable operating characteristics, such as reasonable PID.

By investigating the interplay between these operating characteristic metrics, we provide evidence-based answers to common questions encountered in Bayesian clinical trial design: Should the design be calibrated using Type I error or alternative Bayesian metrics? If the latter is chosen, what criteria are most appropriate, and how do they relate to traditional Frequentist benchmarks? Our analysis offers a framework for navigating these trade-offs to ensure robust and reliable trial conclusions.

\subsection{Bayesian power and Type I error}
The following results establish the relationship between Bayesian Power ($\beta_B(c)$), Bayesian Conditional Power ($\beta_C(c)$), the Bayesian Type I Error rate ($\alpha_B(c)$), and the Frequentist Type I Error rate ($\alpha(c)$).

\begin{proposition} \label{prop:BP_decomp_CP}
The Bayesian power $\beta_B(c)$ admits the decomposition
\begin{align*}
\beta_B(c) = \gamma_1\,\beta_C(c)+\gamma_0\,\alpha_B(c),
\label{eqn:BP_decomp_CP}
\end{align*}
where $\gamma_1=\Pr(\theta>\delta)$ and $\gamma_0=1-\gamma_1$. In practical settings where $\alpha_B(c) < \beta_C(c)$, it follows that $\beta_B(c) \le \beta_C(c)$. Furthermore, if the design prior $\pi_{\mathrm{d}}(\theta)$ concentrates entirely on the effective region ($\gamma_1 \to 1$), then $\beta_B(c)\to \beta_C(c)$.
\end{proposition}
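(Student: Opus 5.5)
The decomposition follows from the law of total probability applied to the design prior, so I would work directly from the integral definition of Bayesian power in \eqref{eqn:bayesian_power}. Write $p(\theta) = \Pr\{\mathcal{S}(\mathcal{D};c)=1\mid\theta\}$ and split the domain of integration into the effectiveness region $\{\theta>\delta\}$ and the ineffectiveness region $\{\theta\le\delta\}$. On each region, factor the design prior density as $\pi_{\mathrm{d}}(\theta) = \gamma_1\,\pi_{\mathrm{d}}(\theta\mid\theta>\delta)$ for $\theta>\delta$, and $\pi_{\mathrm{d}}(\theta) = \gamma_0\,\pi_{\mathrm{d}}(\theta\mid\theta\le\delta)$ for $\theta\le\delta$. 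These factorizations hold whenever $\gamma_1,\gamma_0>0$. If one of them is zero, that region contributes nothing, and we adopt the convention that the corresponding conditional quantity is multiplied by zero. Pulling the constants $\gamma_1$ and $\gamma_0$ out of the two integrals gives exactly $\gamma_1\beta_C(c)$ and $\gamma_0\alpha_B(c)$, by the definitions of Bayesian conditional power and the Bayesian Type I error rate.

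For the inequality, use $\gamma_0 = 1-\gamma_1$ to rewrite the identity as
\[
\beta_C(c) - \beta_B(c) = \gamma_0\bigl\{\beta_C(c) - \alpha_B(c)\bigr\}.
\]
The right-hand side is nonnegative because $\gamma_0\ge 0$ and, by hypothesis, $\alpha_B(c)<\beta_C(c)$. This gives $\beta_B(c)\le\beta_C(c)$, with equality exactly when $\gamma_0=0$. Equivalently, $\beta_B(c)$ is a convex combination of $\beta_C(c)$ and $\alpha_B(c)$, so it lies between them.

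For the limit, bound the same difference using $\beta_C(c),\alpha_B(c)\in[0,1]$:
\[
|\beta_C(c)-\beta_B(c)| = \gamma_0\,|\beta_C(c)-\alpha_B(c)| \le \gamma_0 .
\]
This bound tends to $0$ as $\gamma_1\to1$. The limit therefore holds uniformly, even if $\beta_C(c)$ itself changes with the design prior along the sequence, which is the natural reading of letting the prior concentrate on $\mathcal{E}$.

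The only potential obstacle is measure-theoretic bookkeeping. The design prior may not have a density, for example a point mass, so I would phrase the split with respect to the measure $\pi_{\mathrm{d}}(d\theta)$ rather than a density. The degenerate cases $\gamma_0=0$ or $\gamma_1=0$ need the convention stated above so that the conditional distributions are not undefined. With those in place, every step is elementary.
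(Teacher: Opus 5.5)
Your proof is correct and takes the same route as the paper: the decomposition is the law of total probability over $\mathcal{E}$ and $\bar{\mathcal{E}}$, which you write as a split of the design-prior integral and the paper writes as $\Pr(\mathcal{S}=1)=\Pr(\mathcal{S}=1\mid\mathcal{E})\Pr(\mathcal{E})+\Pr(\mathcal{S}=1\mid\bar{\mathcal{E}})\Pr(\bar{\mathcal{E}})$. Your identity $\beta_C(c)-\beta_B(c)=\gamma_0\{\beta_C(c)-\alpha_B(c)\}$ and the bound $|\beta_C(c)-\beta_B(c)|\le\gamma_0$ spell out the inequality and the limit that the paper asserts without computation.
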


\begin{proposition} \label{prop:FPR_le_T1E}
The Bayesian Type I error rate is bounded by the Frequentist Type I error rate:
\begin{equation*}
\alpha_B(c) \le \alpha(c).
\end{equation*}
Equality holds if and only if the design prior $\pi_{\mathrm{d}}(\theta)$ is a point mass at the boundary point $\theta = \delta$.
\end{proposition}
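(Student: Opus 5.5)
The plan is to write $\alpha_B(c)$ as an average of the frequentist rejection function over the ineffectiveness region, and then bound that average by its value at the boundary. Define the power function
\[
g(\theta) = \Pr\{\mathcal{S}(\mathcal{D};c)=1 \mid \theta\}.
\]
Then $\alpha(c)=g(\delta)$ and
\[
\alpha_B(c)=\int_{\theta\le\delta} g(\theta)\,\pi_{\mathrm{d}}(\theta\mid\theta\le\delta)\,d\theta .
\]
Since $\pi_{\mathrm{d}}(\cdot\mid\theta\le\delta)$ is a probability measure on $(-\infty,\delta]$, the inequality follows immediately once we know that $g(\theta)\le g(\delta)$ for every $\theta\le\delta$. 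Integrating gives $\alpha_B(c)\le g(\delta)=\alpha(c)$.

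The key step, and the main obstacle, is establishing that $g$ is nondecreasing in $\theta$. This is not automatic for an arbitrary model, so I would make the standard assumption explicit. The assumption is that the sampling model $f(\mathcal{D}\mid\theta)$ has the monotone likelihood ratio property in a sufficient statistic $T$; this covers the normal, binomial and exponential-family cases used in the paper's closed-form tables.

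I would then argue in two steps. First, MLR implies that the posterior $\Pr(\theta>\delta\mid T=t)$ is nondecreasing in $t$ for any analysis prior, because the posterior family is stochastically ordered in $t$. Hence the success event $\{\Pr(\theta>\delta\mid\mathcal{D})>c\}$ is an upper set $\{T> t_c\}$, or $\{T\ge t_c\}$ in the discrete case. Second, by Karlin--Rubin, MLR implies that $\Pr_\theta(T>t_c)$ is nondecreasing in $\theta$, so $g$ is monotone.

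For the normal case I would verify this directly, since it is a one-line calculation. There $g(\theta)=1-\Phi\{(t_c-\theta)/\sigma_n\}$, which is increasing in $\theta$.

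For the equality statement, note that $g(\delta)-g(\theta)\ge 0$ on $\{\theta\le\delta\}$, so $\alpha(c)-\alpha_B(c)$ is the integral of a nonnegative function. Equality therefore holds iff the conditional design prior puts all its mass where $g(\theta)=g(\delta)$. Under strict monotonicity of $g$ this set is $\{\delta\}$, so equality holds iff $\pi_{\mathrm{d}}(\cdot\mid\theta\le\delta)$ is a point mass at $\delta$.

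Strictness of $g$ holds in the normal model, and more generally whenever the MLR is strict and the rejection region is nontrivial. I would note this as the condition behind the ``only if'' direction. In degenerate discrete cases $g$ can be flat, for example when $g\equiv 0$ near $\delta$, and the ``only if'' can then fail. So I would state it under the nondegeneracy condition $0<\alpha(c)<1$ together with strict MLR.
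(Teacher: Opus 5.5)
Your argument is correct and is essentially the paper's proof: both write $\alpha_B(c)$ as an average of the power function over $\pi_{\mathrm{d}}(\cdot\mid\theta\le\delta)$ and bound it by its value at $\theta=\delta$ using monotonicity. The paper simply posits that monotonicity as a standard assumption, whereas you derive it from MLR via posterior stochastic ordering and Karlin--Rubin. Your treatment of equality is sharper than the paper's, whose integral display even writes a strict inequality. The correct condition is that the \emph{conditional} design prior on $\{\theta\le\delta\}$, not the whole design prior, is a point mass at $\delta$, and the ``only if'' direction genuinely needs strict monotonicity and $0<\alpha(c)<1$, exactly as you note.
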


\begin{proposition}
\label{prop:alpha_asymp}
Under  standard regularity conditions \citep{van2000asymptotic}, 
\[
\alpha(c) %=\Pr\{\mathcal{S}(\mathcal{D};c)=1 \mid \theta=\delta\}
\to 1-c
\quad \text{as } n\to\infty.
\]
\end{proposition}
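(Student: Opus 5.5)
The plan is to reduce the claim to the Bernstein--von Mises theorem combined with asymptotic normality of an efficient estimator, both taken under the regularity conditions of \citet{van2000asymptotic}. Let $\hat\theta_n$ be the MLE with asymptotic variance $I(\delta)^{-1}/n$ at the true value $\theta=\delta$. Assume the analysis prior $\pi_{\mathrm{a}}$ has a density that is continuous and positive at $\delta$. Bernstein--von Mises then gives, in $P_\delta$-probability,
\[
\sup_{A}\left|\Pr\!\left(\theta\in A\mid\mathcal{D}\right)-N\!\left(\hat\theta_n,\,I(\delta)^{-1}/n\right)(A)\right|\to 0 .
\]
Taking $A=(\delta,\infty)$ yields
\[
\Pr(\theta>\delta\mid\mathcal{D})=\Phi(Z_n)+o_P(1),
\qquad
Z_n=\sqrt{n I(\delta)}\,(\hat\theta_n-\delta).
\]
This identity is the key first step.

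Next, under $\theta=\delta$ we have $Z_n\rightsquigarrow N(0,1)$. By the continuous mapping theorem, $\Phi(Z_n)\rightsquigarrow U\sim\mathrm{Uniform}(0,1)$. Slutsky's lemma then gives $\Pr(\theta>\delta\mid\mathcal{D})\rightsquigarrow U$ as well. In words, the posterior probability behaves asymptotically like a one-sided $p$-value complement under the boundary null.

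It follows that
\[
\alpha(c)=\Pr_\delta\{\Pr(\theta>\delta\mid\mathcal{D})>c\}\to\Pr(U>c)=1-c .
\]
The Portmanteau theorem justifies this limit, because the limit law $U$ is continuous and so the boundary $\{c\}$ has limit probability zero for every $c\in(0,1)$. The informal normal-endpoint case makes the mechanism explicit. With a flat analysis prior, $\Pr(\theta>\delta\mid\mathcal{D})=\Phi(\sqrt n(\bar X-\delta)/\sigma)$, so $\alpha(c)=1-c$ holds exactly for every $n$. With a conjugate $N(\mu_0,\tau^2)$ prior, the prior contributes only an $O(n^{-1/2})$ shift to the standardized statistic, which vanishes in the limit.

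The main obstacle is the first step: making the uniform (total-variation) approximation legitimately yield the $o_P(1)$ statement for the particular half-line. A related difficulty is that the analysis prior must not degenerate with $n$. I would simply invoke van der Vaart's Theorem~10.1, using local asymptotic normality together with the existence of uniformly consistent tests and prior positivity at $\delta$. I would also remark that the result fails if the prior is point-massed or if its informativeness grows with $n$, for example in power priors with borrowed data of comparable size, since then the $O(n^{-1/2})$ shift no longer vanishes.
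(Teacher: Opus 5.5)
Your proposal is correct and follows the same route as the paper's proof: Bernstein--von Mises reduces $\Pr(\theta>\delta\mid\mathcal{D})$ to $\Phi$ of the standardized statistic $\sqrt{nI(\delta)}(\hat\theta-\delta)$, which is asymptotically $N(0,1)$ at the boundary $\theta=\delta$, so $\alpha(c)\to\Pr(Z>z_c)=1-c$. Your version is more careful than the paper's. The paper only asserts that the two indicators are ``asymptotically equivalent,'' whereas you make the $o_P(1)$ remainder explicit and close the argument with Slutsky and the Portmanteau theorem, using the continuity of the uniform limit at $c$.
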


Proposition~\ref{prop:BP_decomp_CP} demonstrates that in most practical settings, where trials are designed with high power and low Type I error, calibrating to Bayesian power is a more stringent requirement than calibrating to Bayesian conditional power. Consequently, for a fixed value of $c$, the Bayesian power will be lower than the conditional power (see Figure~\ref{fig:oc_normal_design_prior_sd_0.15}). 

As previously noted, calibrating solely to Bayesian power may not be advisable, as it reflects only a marginal probability of success that conflates true positives and false positives. While informative to a sponsor for resource planning, it is less informative for regulatory bodies primarily concerned with false positive risks. 
Consequently, we recommend that the appropriateness of the threshold $c$ (and the corresponding sample size) be evaluated primarily through Bayesian conditional power, which directly reflects the reliability of a successful trial outcome. Bayesian power should be monitored as a secondary feasibility metric.  As illustrated in Figure~\ref{fig:oc_normal_design_prior_sd_0.15}, the Bayesian power converges toward the Bayesian conditional power as the prevalence of effective treatments $\gamma_1$ increases.

Proposition~\ref{prop:FPR_le_T1E} indicates that the Frequentist Type I error $\alpha(c)$ controls false positive risk in a ``worst-case" sense, as it is defined strictly at the null hypothesis boundary. Therefore, given the same nominal significance level, a design calibrated to control the Bayesian Type I error will result in a more relaxed success criterion (a smaller value of $c$) compared to one calibrated to Frequentist Type I error (see Figure~\ref{fig:oc_normal_design_prior_sd_0.15}). 

Proposition~\ref{prop:alpha_asymp} suggests that when calibrating the Bayesian posterior success criterion \eqref{eqn:decision_rule} to a nominal Frequentist Type I error rate, $\alpha^\ast$ (e.g., 0.025), the decision threshold $c$ should be approximately $1 - \alpha^\ast$ (e.g., 0.975), as it asymptotically controls the Frequentist Type I error rate at the level of $\alpha^\ast$. This relationship serves as a robust initial estimate for calibrating $c$ in finite-sample settings when the objective is to maintain Frequentist Type I error control, particularly when a non-informative or weakly informative analysis prior is employed.

A fundamental distinction between these calibration strategies is that $\alpha(c)$ is invariant to the prevalence of effective treatments and pre-trial beliefs. In contrast, $\alpha_B(c)$ is dependent on the design prior $\pi_{\mathrm{d}}(\theta)$, as it averages the false positive risk over the entire ineffective region $\{\theta \le \delta\}$ according to $\pi_{\mathrm{d}}(\theta)$. Specifically, $\alpha_B(c)$ is influenced by: (i) the prevalence of ineffective treatments encoded by $\pi_{\mathrm{d}}(\theta)$, and (ii) the concentration of ineffective treatment effects near the null boundary $\theta = \delta$. Consequently, $\alpha_B(c)$ can be substantially smaller than $\alpha(c)$ when the design prior assigns little mass near the boundary, though it approaches $\alpha(c)$ as the prior mass concentrates close to $\delta$. As shown in Figure~\ref{fig:oc_normal_design_prior_sd_0.15}, the Bayesian Type I error rate remains consistently lower than the Frequentist Type I error rate across a wide range of effective-treatment prevalences $\gamma_1$.

\subsection{PID and FOR}\label{sec:PID}
The following results establish the relationship between PID($c$), FOR($c$), and the Frequentist Type I error rate $\alpha(c)$.  These connections illuminate the link between designs calibrated to the Frequentist Type I error rate and those guided by Bayesian decision-reliability metrics.

\begin{thm}
\label{thm:PID_bound}
Assume the analysis prior $\pi_{\mathrm{a}}(\theta)$ and the design prior $\pi_{\mathrm{d}}(\theta)$ share the same conditional distributions within the effectiveness ($\mathcal{E}$) and ineffectiveness ($\bar{\mathcal{E}}$) regions, i.e.,  $\pi_{\mathrm{a}}(\theta \mid \mathcal{E}) = \pi_{\mathrm{d}}(\theta \mid \mathcal{E})$ and $\pi_{\mathrm{a}}(\theta \mid \bar{\mathcal{E}}) = \pi_{\mathrm{d}}(\theta \mid \bar{\mathcal{E}})$. Then, the Probability of Incorrect Decision $(\mathrm{PID})$ is bounded by:
\begin{equation} \label{eqn:PID_general_bound}
\mathrm{PID}(c) \le \frac{1}{R \left( \frac{\displaystyle c}{\displaystyle 1-c} \right) + 1},
\end{equation}
where $R = {OE}_{\mathrm{d}} / {OE}_{\mathrm{a}}$ is the ratio of the prior odds of effectiveness (OE) under the design prior (i.e., ${OE}_{\mathrm{d}} = \gamma_1/\gamma_0$) to those under the analysis prior (i.e., ${OE}_{\mathrm{a}}$). 
\end{thm}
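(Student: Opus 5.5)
The plan is to express the design-prior posterior odds of effectiveness through the analysis-prior posterior odds, and then average over the success region. Write $\gamma_1^{\mathrm a}=\Pr_{\mathrm a}(\mathcal{E})$ and $\gamma_0^{\mathrm a}=1-\gamma_1^{\mathrm a}$, so that $OE_{\mathrm a}=\gamma_1^{\mathrm a}/\gamma_0^{\mathrm a}$. Define the region-conditional marginal likelihoods
\[
m_1(\mathcal{D})=\int f(\mathcal{D}\mid\theta)\,\pi(\theta\mid\mathcal{E})\,d\theta,
\qquad
m_0(\mathcal{D})=\int f(\mathcal{D}\mid\theta)\,\pi(\theta\mid\bar{\mathcal{E}})\,d\theta .
\]
By the hypothesis of Theorem~\ref{thm:PID_bound}, these are the same under the analysis and design priors. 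Hence both priors induce the same Bayes factor $B(\mathcal{D})=m_1(\mathcal{D})/m_0(\mathcal{D})$. The analysis posterior odds are then $OE_{\mathrm a}B(\mathcal{D})$, and the design-prior posterior odds are $OE_{\mathrm d}B(\mathcal{D})=R\cdot OE_{\mathrm a}B(\mathcal{D})$.

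The success event $\mathcal{S}(\mathcal{D};c)=1$ is $\{\Pr_{\mathrm a}(\theta>\delta\mid\mathcal{D})>c\}$. This is equivalent to $OE_{\mathrm a}B(\mathcal{D})>c/(1-c)$, that is, $B(\mathcal{D})>c/\{(1-c)OE_{\mathrm a}\}$.

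Next I will write the PID as a ratio of integrals over the success region $A=\{\mathcal{D}:\mathcal{S}=1\}$, using the design-prior marginal $m_{\mathrm d}(\mathcal{D})=\gamma_1m_1+\gamma_0m_0$:
\[
\mathrm{PID}(c)=\frac{\gamma_0\int_A m_0(\mathcal{D})\,d\mathcal{D}}{\gamma_1\int_A m_1(\mathcal{D})\,d\mathcal{D}+\gamma_0\int_A m_0(\mathcal{D})\,d\mathcal{D}} .
\]
This matches the formula $\alpha_B\gamma_0/(\beta_C\gamma_1+\alpha_B\gamma_0)$ given earlier, because $\int_A m_0=\alpha_B(c)$ and $\int_A m_1=\beta_C(c)$. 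On $A$ we have the pointwise bound $m_1(\mathcal{D})>\kappa\, m_0(\mathcal{D})$ with $\kappa=c/\{(1-c)OE_{\mathrm a}\}$. Integrating over $A$ gives
\[
\beta_C(c)\ge\kappa\,\alpha_B(c) .
\]
This is the key step. The function $x\mapsto\gamma_0x/(\gamma_1y+\gamma_0x)$ is decreasing in $y$, so
\[
\mathrm{PID}(c)\le\frac{\gamma_0}{\gamma_1\kappa+\gamma_0}=\frac{1}{OE_{\mathrm d}\,\kappa+1}=\frac{1}{R\,c/(1-c)+1}.
\]
This is exactly \eqref{eqn:PID_general_bound}.

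Most of the work is bookkeeping. The main points needing care are the following:
\begin{enumerate}
\item Justify that the equal-conditional-priors hypothesis makes $m_0$ and $m_1$ common to both priors. This is what makes the Bayes factor prior-odds-free.
\item Handle degenerate cases. If $\alpha_B(c)=0$, the PID is $0$ (or undefined when $A$ has zero probability), and the bound holds trivially. If $\gamma_0$ or $\gamma_1$ is zero, the odds are degenerate, and the bound should be read with the conventions $R=\infty$ or $R=0$.
\item Treat discrete data, such as binary endpoints, by replacing the integrals over $\mathcal{D}$ with sums. Nothing else in the argument changes.
\end{enumerate}
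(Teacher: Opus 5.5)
Your proof is correct and follows essentially the same route as the paper. The matched within-region conditionals make the likelihood ratio $m_1/m_0$ common to both priors, success forces it above $c/\{(1-c)OE_{\mathrm a}\}$, and the design prior odds $OE_{\mathrm d}$ then turn this into the stated bound. The only difference is bookkeeping: the paper bounds $\Pr_{\mathrm d}(\bar{\mathcal{E}}\mid\mathcal{D})$ pointwise on the success region and averages it by the law of total expectation, whereas you integrate the pointwise inequality into $\beta_C(c)\ge\kappa\,\alpha_B(c)$ and substitute into $\mathrm{PID}=\gamma_0\alpha_B/(\gamma_1\beta_C+\gamma_0\alpha_B)$, which has the small bonus of a bound on $\beta_C/\alpha_B$ that does not involve $\gamma_1$.
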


\begin{cor} \label{cor:PID_bound}
When $R=1$, which is true under $\pi_{\mathrm{a}}(\theta) = \pi_{\mathrm{d}}(\theta)$, the PID is strictly bounded by 
%\begin{equation*} \label{eqn:PID_matched_bound}
$\mathrm{PID}(c) \le 1 - c$.
%\end{equation*}
\end{cor}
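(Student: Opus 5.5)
The corollary is a direct specialization of Theorem~\ref{thm:PID_bound}, so the plan is to verify the theorem's hypothesis and then substitute $R=1$ into the bound \eqref{eqn:PID_general_bound}.

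\emph{Step 1: the hypothesis holds.} If $\pi_{\mathrm{a}}(\theta)=\pi_{\mathrm{d}}(\theta)$, then conditioning both on $\mathcal{E}$ and on $\bar{\mathcal{E}}$ gives identical conditional distributions. Hence $\pi_{\mathrm{a}}(\theta\mid\mathcal{E})=\pi_{\mathrm{d}}(\theta\mid\mathcal{E})$ and $\pi_{\mathrm{a}}(\theta\mid\bar{\mathcal{E}})=\pi_{\mathrm{d}}(\theta\mid\bar{\mathcal{E}})$. I take $0<\gamma_1<1$ so that these conditionals are defined. Because the two priors coincide, they also assign the same prior odds of effectiveness, ${OE}_{\mathrm{a}}={OE}_{\mathrm{d}}=\gamma_1/\gamma_0$, and therefore $R=1$.

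\emph{Step 2: substitute $R=1$.} With $R=1$, the right-hand side of \eqref{eqn:PID_general_bound} becomes
\[
\frac{1}{\frac{c}{1-c}+1}=\frac{1-c}{c+(1-c)}=1-c ,
\]
which gives $\mathrm{PID}(c)\le 1-c$.

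The same substitution applies whenever $R=1$ holds together with matched within-region conditionals, even if the priors differ in other respects. There is no real obstacle here beyond confirming that the equality of priors delivers both hypotheses of the theorem.

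A note on the word ``strictly'' in the statement. The direct route yields a non-strict inequality. If a strict bound is intended, I would return to the theorem's proof. When the priors match, the posterior computed from the analysis prior is the true conditional distribution of $\theta$ given $\mathcal{D}$. Then
\[
\mathrm{PID}(c)=E\{\Pr(\theta\le\delta\mid\mathcal{D})\mid \mathcal{S}=1\},
\]
and on the event $\{\mathcal{S}=1\}$ the integrand is strictly less than $1-c$. Averaging a quantity that is pointwise strictly below $1-c$, over an event of positive probability, gives a strict inequality.
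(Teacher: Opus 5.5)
Your proposal is correct and follows the paper's proof exactly. You verify that $\pi_{\mathrm{a}}=\pi_{\mathrm{d}}$ yields matched within-region conditionals and ${OE}_{\mathrm{a}}={OE}_{\mathrm{d}}$, and then substitute $R=1$ into \eqref{eqn:PID_general_bound} to obtain $1-c$. Your strictness remark is a valid addition that the paper's argument, which only delivers $\le$ despite the word ``strictly'', does not supply. One correction to a side remark: under matched within-region conditionals, $R=1$ already forces $\pi_{\mathrm{a}}=\pi_{\mathrm{d}}$, so the priors cannot ``differ in other respects''.
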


Theorem \ref{thm:PID_bound} establishes that the post-trial risk of a false discovery is not a static property of the decision threshold $c$; rather, it is a dynamic value governed by the prior odds ratio $R$, as shown in Equation (\ref{eqn:PID_general_bound}). This result is significant because it suggests that the conventional approach of setting $c \approx 1 - \alpha^*$, which is widely used to control the Frequentist Type I error rate at a nominal level $\alpha^*$ (Proposition~\ref{prop:alpha_asymp}), can also effectively control the PID at that same nominal level, provided the analysis and design priors are perfectly aligned ($R=1$). This alignment provides a unified justification for threshold selection, ensuring that the pre-trial procedural risk (calibrated by Frequentist  Type I error) and the post-trial decision risk (calibrated by PID) are numerically harmonized.

In practice, however, the assumption of matched analysis and design priors may not hold. Frequently, a neutral or non-informative prior with ${OE}_{\text{a}}=1$ is utilized for the analysis, whereas the true data-generating design prior does not necessarily favor effectiveness and ineffectiveness equally. In this common scenario of prior misalignment, simply setting $c=1-\alpha^*$ fails to guarantee control of the PID at the $\alpha^*$ level, as formally indicated by Theorem \ref{thm:PID_bound}. Specifically, when employing a neutral analysis prior (${OE}_{\text{a}} \approx 1$) in high-risk therapeutic areas where prior success rates are low (${OE}_{\text{d}} < 1$), the PID may inflate beyond the nominal level $\alpha^*$ because $R < 1$ (see the Pessimistic scenario in Figure \ref{fig:oc_normal_design_prior_sd_0.15}). Conversely, if an investigator employs an overly skeptical analysis prior in a promising disease area ($R > 1$), the PID  will fall  below the nominal level $\alpha^*$ (Optimistic scenario, Figure \ref{fig:oc_normal_design_prior_sd_0.15}).

Consequently, Equation (\ref{eqn:PID_general_bound}) provides as a dynamic calibration tool: to ensure the realized decision risk remains bounded by a target level $\tau^*$ (e.g., 0.025), the threshold $c$ must be rigorously calibrated against the underlying clinical and prior reality represented by $R$. Because the true design prior $\pi_{\mathrm{d}}(\theta)$ is generally unknown in practice, a comprehensive sensitivity analysis under various specifications of $\pi_{\mathrm{d}}(\theta)$ is essential. As suggested by our results, a natural starting point for this calibration and sensitivity analysis is to set $c = 1 - \tau^*$. We utilize the notation $\tau^*$ as the target nominal level for the PID to emphasize that this threshold is distinct from—and need not be equal to—the target nominal level for Frequentist (or Bayesian) Type I errors.

The theoretical bound established in Theorem \ref{thm:PID_bound} relies on the assumption of congruence between the analysis and design priors within each hypothesis space—specifically, that $\pi_{\mathrm{a}}(\theta \mid \mathcal{E}) = \pi_{\mathrm{d}}(\theta \mid \mathcal{E})$ and $\pi_{\mathrm{a}}(\theta \mid \bar{\mathcal{E}}) = \pi_{\mathrm{d}}(\theta \mid \bar{\mathcal{E}})$. In more general settings where $\pi_{\mathrm{a}}$ and $\pi_{\mathrm{d}}$ vary freely, establishing a closed-form analytic bound becomes significantly more challenging. In such cases, simulation studies are required to assess the range and behavior of the $\mathrm{PID}(c)$. Nevertheless, Theorem \ref{thm:PID_bound} serves as a critical benchmark and a logical starting point for such simulations and the associated sensitivity analyses described above. 

%This dynamic requirement creates an immediate tension with standard Frequentist operating characteristics. In regulatory settings, decision rules are traditionally benchmarked against a fixed Frequentist Type I error rate ($\alpha$). A natural follow-up question arises: when considering the $\mathrm{PID}$ alongside the Type I error rate, what is the exact relationship between the two, given that the analysis prior and the design prior are rarely identical in clinical practice? Specifically, under what underlying clinical conditions and trial operating characteristics does the true Bayesian risk actually breach the accepted Frequentist limit? The answer is provided by the following result.

We have demonstrated that calibrating Bayesian success criteria to control the PID is different from calibrating to control the Frequentist Type I error rate $\alpha(c)$. These two approaches may yield different decision thresholds $c$, even when they share the same target nominal level. The following theorem further establishes the mathematical relationship between these two metrics: 

\begin{thm}
\label{thm:PID_T1E}
% Given a fixed $c$, a necessary and sufficient condition for $\mathrm{PID}(c) < \alpha(c)$ is given by:
Under large-sample settings, given a fixed posterior cutoff $c$, %the condition for $\mathrm{PID}(c) < \alpha(c)$ is approximately given by
$\mathrm{PID}(c) < \alpha(c)$ if and only if
\begin{equation*}
\frac{\beta_C(c)}{\alpha_B(c)} > \frac{\gamma_0}{\gamma_1} \cdot \frac{c}{1-c}.
\label{eqn:PID_T1E}
\end{equation*}
\end{thm}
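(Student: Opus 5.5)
The plan is to start from the closed-form PID expression given in Section~\ref{sec:oc_metrics},
\[
\mathrm{PID}(c)=\frac{\alpha_B(c)\gamma_0}{\beta_C(c)\gamma_1+\alpha_B(c)\gamma_0},
\]
and to replace the Frequentist Type I error $\alpha(c)$ by its large-sample limit $1-c$ from Proposition~\ref{prop:alpha_asymp}. "Under large-sample settings" is read as identifying $\alpha(c)$ with $1-c$. The inequality $\mathrm{PID}(c)<\alpha(c)$ then becomes $\mathrm{PID}(c)<1-c$, and the claim reduces to an exact algebraic equivalence.

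The algebra runs as follows. Assume $\alpha_B(c)>0$ and $\gamma_0,\gamma_1\in(0,1)$ (otherwise the ratio is degenerate). Divide the numerator and denominator of the PID by $\alpha_B(c)\gamma_0$, and write $\rho=\beta_C(c)\gamma_1/\{\alpha_B(c)\gamma_0\}$. This gives $\mathrm{PID}(c)=1/(1+\rho)$. The map $\rho\mapsto 1/(1+\rho)$ is strictly decreasing on $[0,\infty)$. Hence $1/(1+\rho)<1-c$ holds if and only if $1+\rho>1/(1-c)$, that is, $\rho>c/(1-c)$. Multiplying both sides by $\gamma_0/\gamma_1>0$ yields
\[
\frac{\beta_C(c)}{\alpha_B(c)}>\frac{\gamma_0}{\gamma_1}\cdot\frac{c}{1-c},
\]
which is the stated condition. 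Every step is reversible, since we only divide by positive quantities and invert a strictly monotone function, so both directions of the equivalence hold.

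The main obstacle is making the "large-sample" qualifier precise. The equivalence is exact only if $\alpha(c)$ is replaced by $1-c$. For finite $n$ there is a remainder $\alpha(c)-(1-c)=o(1)$, and both sides of the equivalence move with $n$; in particular $\beta_C(c)\to1$ and $\alpha_B(c)$ changes as $n$ grows. My approach is to state the result as an approximate equivalence. An alternative is to phrase it as exact with $\alpha(c)$ replaced by its limit, noting that the equivalence holds whenever the inequality in the display is strict by a margin exceeding the approximation error. I would also remark that the display is the $R=1$ analogue of Theorem~\ref{thm:PID_bound}: in general the threshold $\frac{\gamma_0}{\gamma_1}\frac{c}{1-c}$ compares the design prior odds with the posterior odds implied by the cutoff $c$.
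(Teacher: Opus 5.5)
Your proposal is correct and follows the paper's own proof. Both replace $\alpha(c)$ by its large-sample value $1-c$ from Proposition~\ref{prop:alpha_asymp} and then rearrange $\mathrm{PID}(c)=\gamma_0\alpha_B(c)/\{\gamma_0\alpha_B(c)+\gamma_1\beta_C(c)\}$; your monotone inversion of $1/(1+\rho)$ is just a repackaging of the paper's cross-multiplication to $c\,\gamma_0\alpha_B(c)<(1-c)\,\gamma_1\beta_C(c)$, and your caveat that $\beta_C(c)$ and $\alpha_B(c)$ also move with $n$ (so the statement is only an approximate equivalence) is a fair point the paper leaves implicit.
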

%This result relies on a large-sample framework where the likelihood of the data dominates the posterior distribution \citep{van2000asymptotic}. Under such settings, particularly when employing a non-informative analysis prior, the Frequentist Type I error rate associated with the posterior success rule approximately satisfies $1-c$. 
This result demonstrates that, for a given posterior probability threshold $c$, there is no strict ordering between $\mathrm{PID}(c)$ and $\alpha(c)$. Whether the PID is smaller or larger than the Type I error rate depends on the interaction of four components: the Bayesian conditional power ($\beta_{C}(c)$); the Bayesian Type I error rate ($\alpha_{B}(c)$), the prior odds of ineffectiveness ($\gamma_0/\gamma_1$) induced by the design prior $\pi_{\mathrm{d}}(\theta)$; and the stringency of the decision rule as defined by the threshold $c$. In general, the Frequentist Type I error $\alpha(c)$ is more likely to exceed the $\mathrm{PID}(c)$, meaning the Frequentist metric is more conservative, when the Bayesian conditional power is high, the Bayesian Type I error rate is low, the prior probability of effectiveness ($\gamma_1$) is high, and the posterior cutoff $c$ is relatively modest.

To illustrate, consider a design where $\beta_C(c) = 0.80$ and $\alpha_B(c) = 0.02$ (recalling that $\alpha_B(c) \leq \alpha(c)$ indicated by Proposition \ref{prop:FPR_le_T1E}, where the latter is typically set at $0.025$ for registration trials). In this case, the ratio $\beta_C(c)/\alpha_B(c) = 40$.
\begin{itemize}
\item Neutral Prior: If the \textit{a priori} probability of effectiveness is $\gamma_1 = 0.5$, the condition $\mathrm{PID}(c) < \alpha(c)$ holds only if $c < 0.9756$.
\item High-Success Area: In a therapeutic area where the probability of effectiveness is high (e.g., $\gamma_1 = 0.7$), the condition is satisfied if $c < 0.9894$ (as illustrated in the right panel of Figure~\ref{fig:oc_normal_design_prior_sd_0.15}, where the PID remains strictly below the Frequentist Type I error rate curve).
\item High-Failure Area: Conversely, in a high-risk therapeutic area where the probability of effectiveness is low (e.g., $\gamma_1 = 0.1$), the Frequentist Type I error only ``protects" the PID if the cutoff is set significantly lower, specifically $c < 0.8163$.
\end{itemize}

The continuous dynamics of this condition are visually captured in Figure~\ref{fig:difference_PID_T1E}, which plots the difference $\mathrm{PID}(c) - \alpha(c)$ across the entire spectrum of prior effectiveness probabilities ($\gamma_1$). The horizontal dashed line at zero represents the boundary condition where the two metrics are identical. The vertical drop lines mark the exact threshold where the equality in Theorem \ref{thm:PID_T1E} is met. To the left of these lines—representing high-risk therapeutic areas with low pre-trial probabilities of success—the curves sit strictly above zero, indicating that the $\mathrm{PID}$ exceeds the nominal Frequentist Type I error rate limit. This illustrates the regulatory safeguard mechanism: to protect the false discovery risk in increasingly challenging disease areas (lower $\gamma_1$), the trial design requires a progressively more stringent posterior cutoff $c$ (e.g., shifting from the $c=0.90$ threshold to $c=0.975$).

This highlights a critical regulatory insight: in high-risk therapeutic areas where the prior probability of effectiveness is low, a conventional $0.975$ cutoff may yield a PID that exceeds the nominal Frequentist Type I error rate. In such scenarios, the Frequentist Type I error rate fails to serve as a conservative upper bound for the actual risk of a false discovery.

The following result further characterizes the bounds of the discrepancy between $\mathrm{PID}(c)$ and $\alpha(c)$, demonstrating that the difference in Bayesian decision reliability relative to the Frequentist Type I error rate is inherently constrained by the posterior cutoff $c$. This result provides a theoretical basis for the ``lower envelope" often observed in simulation studies and underscores how increasingly stringent posterior thresholds limit the achievable divergence between these two metrics.

\begin{proposition}\label{prop:PID_T1E_bounds}
Under large-sample settings, for any posterior cutoff $c \in (0,1)$, the difference between $\mathrm{PID}(c)$ and the Frequentist Type I error rate $\alpha(c)$ is approximately bounded as follows:
\[
-(1-c) \le \mathrm{PID}(c) - \alpha(c) \le c.
\]
Furthermore, at the limits of the prior probability of effectiveness ($\gamma_1$),
\[
\lim_{\gamma_1 \to 1} \bigl[\mathrm{PID}(c) - \alpha(c)\bigr] = -(1-c),
\qquad
\lim_{\gamma_1 \to 0} \bigl[\mathrm{PID}(c) - \alpha(c)\bigr] = c.
\]
\end{proposition}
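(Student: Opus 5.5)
We combine Proposition~\ref{prop:alpha_asymp} with the elementary fact that $\mathrm{PID}(c)$ is a probability. Under the large-sample regime, Proposition~\ref{prop:alpha_asymp} gives $\alpha(c)\approx 1-c$, and this approximation does not involve the design prior, so it holds uniformly in $\gamma_1$. Replacing $\alpha(c)$ by $1-c$, the difference becomes $\mathrm{PID}(c)-(1-c)$. Since $0\le \mathrm{PID}(c)\le 1$, this lies in $[-(1-c),\,1-(1-c)]=[-(1-c),\,c]$, which is the two-sided bound. The only care needed is to state that the bound holds up to the $o(1)$ error in $\alpha(c)\to 1-c$. This is what ``approximately'' means in the statement.

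For the limits we use the representation
\[
\mathrm{PID}(c)=\frac{\alpha_B(c)\gamma_0}{\beta_C(c)\gamma_1+\alpha_B(c)\gamma_0}.
\]
As $\gamma_1\to 1$ (so $\gamma_0\to 0$), the numerator tends to $0$. The denominator tends to $\beta_C(c)$, which is strictly positive because $c<1$ and the success probability is positive for $\theta>\delta$. Hence $\mathrm{PID}(c)\to 0$ and the difference tends to $-(1-c)$.

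As $\gamma_1\to 0$, the numerator and denominator both tend to $\alpha_B(c)$. Provided $\alpha_B(c)>0$ (again true for $c<1$ under a nondegenerate sampling model), $\mathrm{PID}(c)\to 1$ and the difference tends to $1-(1-c)=c$. Thus both limits attain the endpoints of the bound, so the bound is sharp.

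The main obstacle is that $\beta_C(c)$ and $\alpha_B(c)$ depend on the conditional design priors $\pi_{\mathrm d}(\theta\mid\mathcal{E})$ and $\pi_{\mathrm d}(\theta\mid\bar{\mathcal{E}})$, which may themselves change as $\gamma_1$ varies. To make the limits well defined, I would fix these conditional distributions and vary only the mixing weight $\gamma_1$, as Proposition~\ref{prop:BP_decomp_CP} implicitly does. Under this parametrization $\beta_C(c)$ and $\alpha_B(c)$ are constants in $\gamma_1$. Only the positivity conditions $\beta_C(c)>0$ and $\alpha_B(c)>0$ then need to be verified. These follow from the continuity of the sampling distribution in the regular models covered by Proposition~\ref{prop:alpha_asymp}.
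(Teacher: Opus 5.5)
Your proposal is correct and follows the paper's proof essentially step for step: you replace $\alpha(c)$ by its large-sample value $1-c$, use $0\le \mathrm{PID}(c)\le 1$ to get the two-sided bound, and take the $\gamma_1$ limits in $\mathrm{PID}(c)=\gamma_0\alpha_B(c)/\{\gamma_0\alpha_B(c)+\gamma_1\beta_C(c)\}$ under the positivity assumptions $\alpha_B(c)>0$ and $\beta_C(c)>0$. Your remark that the conditional design priors must be held fixed while only the mixing weight $\gamma_1$ varies is useful, since it makes explicit an assumption the paper's proof uses only implicitly.
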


Figure~\ref{fig:difference_PID_T1E} provides a direct visual confirmation of these theoretical bounds; as the prior probability of effectiveness becomes highly optimistic ($\gamma_1 \to 1$), the difference curves asymptotically flatten out at the exact lower limits dictated by $-(1-c)$.

While the $\mathrm{PID}$ protects the regulatory standard by bounding the risk of false discoveries, comprehensive trial optimization also requires controlling the opportunity cost of false rejections. This is captured by the $\mathrm{FOR}$, the probability that a treatment is genuinely effective given that the trial fails to declare success ($\mathcal{S}=0$). $\mathrm{FOR}$ is closely related to Bayesian conditional power  $\beta_C(c)$ as follows:
\begin{equation*} 
\mathrm{FOR}(c) = \frac{\gamma_1 (1 - \beta_C(c))}{\gamma_1 (1 - \beta_C(c)) + \gamma_0 (1 - \alpha_B(c))}.
\end{equation*}
Thus, increasing the conditional power of the trial, e.g.,  by reducing the threshold $c$,  drives down the $\mathrm{FOR}$. 

This  highlights a fundamental tension in trial design between the regulatory mandate and the sponsor's objective. If a trial designer blindly raises the posterior threshold $c$ to satisfy strict regulatory bounds on the $\mathrm{PID}$ (particularly in high-risk therapeutic areas, as demonstrated by Theorem \ref{thm:PID_T1E}), the trial's conditional power inevitably drops. Consequently, the $\mathrm{FOR}$ sharply inflates, exposing the sponsor to a massive risk of abandoning a genuinely effective, pipeline-defining therapy simply because the decision rule was overly stringent. Therefore, establishing the appropriate posterior threshold $c$ should not rely on evaluating the $\mathrm{PID}$ or the Frequentist Type I error in a vacuum. Instead, trial calibration requires a holistic evaluation of these competing operating characteristics.

\section{Case study} \label{sec:casestudy}

Acute myocardial infarction complicated by cardiogenic shock is a highly lethal condition that places patients at extreme risk for early death \citep{hochman1999early, hochman2006early, thiele2015management}. While emergency revascularization is the established cornerstone of treatment, a major clinical dilemma persisted regarding whether physicians should intervene solely on the infarct-related artery or concurrently treat all significant coronary blockages. To resolve this debate, the CULPRIT-SHOCK trial evaluated these revascularization strategies by randomly assigning 706 patients to either culprit-lesion-only percutaneous coronary intervention (PCI) or immediate multivessel PCI. At one year, \citet{thiele2018one} reported that the primary composite endpoint of death or renal-replacement therapy occurred in 50.0\% (172 of 344) of the culprit-only group and 56.9\% (194 of 341) of the multivessel group. Yielding a relative risk of 0.88 with a 95\% confidence interval (CI) of 0.76 to 1.01, the upper bound marginally crossed the null threshold, with the corresponding $p$-value $> 0.05$.  As a result, the authors concluded that “mortality did not differ significantly between the two groups at 1 year of follow-up.”

We retrospectively re-designed the CULPRIT-SHOCK trial using the Bayesian framework and principles established in the preceding sections. The treatment effect is parameterized as the absolute risk reduction, $\theta = \theta_C - \theta_T$, where $\theta_C$ and $\theta_T$ denote the 1-year mortality rates for the multivessel control and culprit-only arms, respectively. The success criterion for concluding that the culprit-only arm is superior to the control is defined as $\Pr(\theta > 0 \mid \mathcal{D}) > c$. Below, we evaluate two approaches for calibrating the threshold $c$.

The first approach calibrates $c$ by controlling the one-sided Frequentist Type I error rate at a nominal level of $\alpha^* = 0.025$. Under non-informative analysis priors for the treatment and control arms ($\pi_{\mathrm{a}, T}(\theta_T) = \text{Beta}(1, 1)$ and $\pi_{\mathrm{a}, C}(\theta_C) = \text{Beta}(1, 1)$) with beta-binomial models, the calibrated threshold is $c = 0.975$ (Table~\ref{tbl:case_study_calibrated_final}). Given the observed trial data, the posterior probability of efficacy is $\Pr(\theta > 0 \mid \mathcal{D}) = 0.965$; consequently, the trial fails to meet the success criterion. This outcome is expected, as a decision rule calibrated to the Frequentist Type I error rate, particularly when employing non-informative analysis priors, naturally yields results consistent with standard frequentist significance testing.

The second approach adopts a Bayesian paradigm, calibrating $c$ by setting the PID to a nominal value of $\tau^* = 0.025$. As previously noted, the PID is intrinsically linked to the choice of the design prior. We constructed the design prior based on historical registry data \citep{van2010effect}, which reported 1-year mortality rates of 53.0\% (66/124) for culprit-only and 60.0\% (22/37) for multivessel PCI. By updating a non-informative prior $\text{Beta}(1, 1)$ with these data, we obtained design priors of $\pi_{\mathrm{d}, T}(\theta_T) = \text{Beta}(67, 59)$ and $\pi_{\mathrm{d}, C}(\theta_C) = \text{Beta}(23, 16)$ for the treatment and control arms, respectively. While we acknowledge that alternative specifications for the design prior exist, this approach provides a principled and evidence-based foundation for evaluating operating characteristics and performing robust sensitivity analyses.

First, we consider a scenario where the sponsor employs analysis priors that perfectly match the design priors for each arm, specifically $\pi_{\mathrm{a}, T}(\theta_T) = \text{Beta}(67, 59)$ and $\pi_{\mathrm{a}, C}(\theta_C) = \text{Beta}(23, 16)$. A numerical calibration targeting a $\mathrm{PID}$ of 0.025 lowers the required success threshold to $c = 0.8145$. 
%This result empirically validates Corollary \ref{cor:PID_bound}, demonstrating that under perfectly matched priors, the PID is strictly bounded above by $1 - c = 0.1855$. 
Driven by these informative analysis priors, the trial's posterior probability increases to $\Pr(\theta > 0 \mid \mathcal{D}) = 0.966$, successfully meeting the efficacy criterion. Under this setting, the BCP is 83.2\% and the FOR is 34.1\%. For reference, the Frequentist Type I error rate is 21.8\%, though we note that this calibration procedure does not explicitly aim to control frequentist error. 

We then consider a scenario utilizing non-informative analysis priors, $\pi_{\mathrm{a}, T}(\theta_T) = \text{Beta}(1, 1)$ and $\pi_{\mathrm{a}, C}(\theta_C) = \text{Beta}(1, 1)$, to minimize the influence of the prior on the final inference. In this case, the calibrated threshold is $c = 0.772$ to maintain the PID at $\tau^* = 0.025$. Given that the observed trial data yielded a posterior probability of 0.965, the threshold is exceeded, and success is again declared. We further evaluated a more conservative target risk of $\tau^* = 0.01$, which increased the threshold to $c = 0.909$; notably, the final decision remains unchanged. Under this stricter target, the BCP and FOR are 71.6\% and 45.3\%, respectively (Table~\ref{tbl:case_study_calibrated_final}).

Finally, we conducted a sensitivity analysis by systematically varying the treatment arm's design prior $\pi_{\mathrm{d}, T}(\theta_T)$ while holding its effective sample size constant. Specifically, we examined a neutral scenario ($\pi_{\mathrm{d}, T}(\theta_T) = \text{Beta}(74, 52)$), where mean mortality matches the control at roughly 59.0\%, and a pessimistic scenario ($\pi_{\mathrm{d}, T}(\theta_T) = \text{Beta}(81, 45)$), where the treatment underperforms the control by the same 5.8\% margin observed in the optimistic setting (mean mortality $\approx$ 64.8\%). The design prior for the control arm remains fixed at $\pi_{\mathrm{d}, C}(\theta_C) =\text{Beta}(23, 16)$. As shown in Table \ref{tbl:case_study_calibrated_final}, the CULPRIT-SHOCK trial consistently meets the efficacy criterion across the majority of PID targets and design prior scenarios. The sole exception occurs under the most extreme combination: the pessimistic design prior coupled with a strict $\tau^* = 0.01$. 

Overall, the Bayesian analysis provides robust evidence for the clinical advantage of culprit-only PCI. Our conclusion aligns with the interpretation by \citet{Yan2019}, who noted that while classic Type I error approaches may conclude that mortality did not differ significantly, a Bayesian analysis demonstrates that the odds are approximately 28 to 1 in favor of culprit-only PCI—a margin that most clinicians and patients would find compelling for treatment selection.

\section{Discussion} \label{sec:Discussion}
In this work, we formalize various Bayesian decision-error metrics to provide a rigorous demonstration of error control in clinical trials based on Bayesian methodology \citep{FDA2026, EMA2026}. We systematically investigate the theoretical interrelationships of these metrics and their alignment with the ``standard" approach of calibrating Bayesian success criteria to the Frequentist Type I error rate. We demonstrate that while these Bayesian decision-error metrics are related to the latter, they are fundamentally distinct. In particular, calibrating to the PID may be more conservative, less conservative, or equivalent to calibrating to the Frequentist Type I error rate, depending critically on the congruence between the design prior and the analysis prior. Our theoretical results provide a robust roadmap for calibrating Bayesian decision rules in a transparent and reproducible manner.

While our analysis focused on Bayesian success criteria based on posterior probability, the underlying principles should hold for other related criteria, such as those based on predictive probability, as the posterior distribution encapsulates all relevant information regarding $\theta$ contained in the data. Furthermore, while we have prioritized operating characteristics related to decision-making, recent regulatory guidance emphasizes that other inferential metrics, including bias, mean square error, and the coverage of credible intervals, must also be rigorously evaluated during the trial design phase. Under the framework presented here, a sensible approach is to first calibrate the posterior probability threshold $c$ based on a primary decision-error metric (e.g., Frequentist Type I error or PID) and subsequently evaluate inferential metrics given that threshold. If the resulting metrics reveal suboptimal operating characteristics, the value of $c$ can be iteratively refined to achieve a well-balanced profile across all statistical dimensions.

Our analysis has primarily focused on the relationship between Bayesian calibration methods based on Frequentist Type I error control and PID control. A natural question arises regarding the relationship between the resulting Bayesian success criteria and the classic Frequentist hypothesis testing framework. We briefly explore this connection within the Supplementary Materials (Section \ref{proofprop3}). Specifically, we demonstrate that a Bayesian posterior success criterion  (\ref{eqn:decision_rule}) calibrated for Frequentist Type I error is asymptotically equivalent to a Frequentist hypothesis test. Furthermore, we show that in certain cases, such as under a normal model, these two frameworks are exactly equivalent.

\newpage

\begin{table}[ht]
\centering
\caption{Decision outcomes under the Bayesian success rule.}
\label{tbl:2by2}
\begin{tabular}{lcc} 
\toprule
 & $\theta>\delta$ (effective) & $\theta\le\delta$ (ineffective) \\
\midrule
Success & True positive(TP) & False positive (FP) \\
Failure & False negative (FN) & True negative (TN) \\
\bottomrule
\end{tabular}
\end{table}

\begin{landscape}
\begin{table}[ht] 
\centering
\caption{Summary of operating characteristic metrics, their definitions, analytical solutions, and clinical interpretations.}
\label{tbl:OC_summary}
\renewcommand{\arraystretch}{1.8}

\begin{tabular}{
>{\centering\arraybackslash}m{3.8cm}
>{\centering\arraybackslash}m{3.2cm}
>{\centering\arraybackslash}m{3cm}
>{\centering\arraybackslash}m{4cm}
>{\arraybackslash}m{6.5cm}
}

\toprule
Metric & Probability Definition & Decision-table Interpretation & Analytical Solution & \multicolumn{1}{>{\centering\arraybackslash}m{6cm}}{Clinical Interpretation} \\
\midrule

Bayesian conditional power $\beta_C(c)$
& $\Pr(\mathcal{S}=1 \mid \mathcal{E})$
& $\displaystyle \frac{\mathrm{TP}}{\mathrm{TP}+\mathrm{FN}}$
& $\displaystyle 1 - \frac{ \Phi(b) - \Phi_2(a, b; \rho)}{1 - \Phi(a)}$
& Among truly effective treatments, the chance the trial declares success. \\

Bayesian power $\beta_B(c)$
& $\Pr(\mathcal{S}=1)$
& $\displaystyle \frac{\mathrm{TP}+\mathrm{FP}}{N}$
& $\displaystyle 1 - \Phi(b)$
& Predictive probability of declaring success before seeing data, averaged over all plausible effects. \\

(Frequentist) Type I error rate $\alpha(c)$
& $\Pr(\mathcal{S}=1 \mid \theta=\delta)$
&
& $\displaystyle 1 - \Phi\left(\frac{y_c - \delta}{v} \right)$
& Worst-case false-positive guarantee at the null boundary, independent of the design prior. \\

Bayesian false positive rate $\alpha_B(c)$
& $\Pr(\mathcal{S}=1 \mid \bar{\mathcal{E}})$
& $\displaystyle \frac{\mathrm{FP}}{\mathrm{FP}+\mathrm{TN}}$
& $\displaystyle 1 - \frac{ \Phi_2(a, b; \rho)}{\Phi(a)}$
& Among truly ineffective treatments, the chance the trial incorrectly declares success. \\

Probability of incorrect decision $\mathrm{PID}(c)$
& $\Pr(\bar{\mathcal{E}} \mid \mathcal{S}=1)$
& $\displaystyle \frac{\mathrm{FP}}{\mathrm{TP}+\mathrm{FP}}$
& $\displaystyle \frac{\Phi(a) - \Phi_2(a, b; \rho)}{1 - \Phi(b)}$
& Among declared successes, the probability the treatment is actually ineffective. \\

False omission rate $\mathrm{FOR}(c)$
& $\Pr(\mathcal{E} \mid \mathcal{S}=0)$
& $\displaystyle \frac{\mathrm{FN}}{\mathrm{FN}+\mathrm{TN}}$
& $\displaystyle 1 - \frac{\Phi_2(a, b; \rho)}{\Phi(b)}$
& Among declared failures, the probability the treatment is actually effective. \\

\bottomrule
\end{tabular}

\vspace{-2mm}
\begin{flushleft}\footnotesize
\textbf{Notes.} $\mathcal{E}$ denotes the effectiveness event ($\theta > \delta$) and $\mathcal{S} = 1$ indicates trial success. 
$N = \mathrm{TP}+\mathrm{FP}+\mathrm{FN}+\mathrm{TN}$ denotes the total probability space. 
Closed-form expressions are written in terms of the standard normal cumulative distribution function $\Phi(\cdot)$ and the bivariate normal cumulative distribution function $\Phi_2(\cdot,\cdot;\rho)$. 
Here, $y_c$ denotes the decision boundary for the sample mean corresponding to the posterior probability cutoff $c$, $a$ and $b$ denote the standardized boundaries corresponding to the clinical margin and decision threshold, respectively, and $\rho$ denotes the correlation between the true treatment effect and the sample mean under the design prior. $v^2$ denotes the sampling variance of the sample mean estimator. Derivations of the analytical expressions are provided in the Supplementary Materials.
\end{flushleft}

\end{table}
\end{landscape}

\begin{figure}[H]
\centering
\includegraphics[width=1\textwidth]{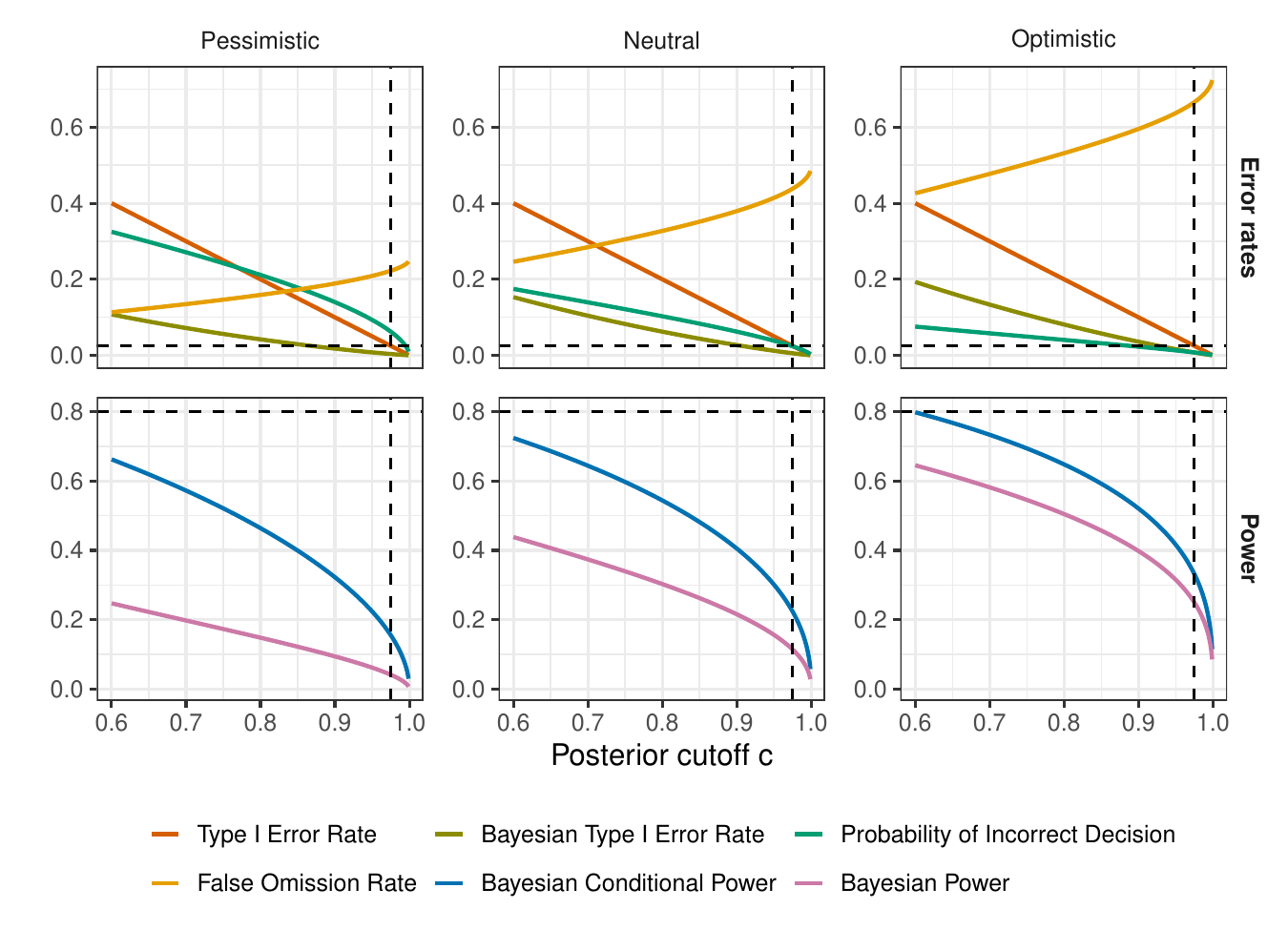}
\caption{Operating characteristics for single-arm continuous endpoints across posterior probability cutoffs $c$, evaluated under three design priors: pessimistic ($\pi_{\mathrm{d},T}(\theta_T) = N(-0.1, 0.15^2)$), neutral ($\pi_{\mathrm{d},T}(\theta_T) = N(0, 0.15^2)$), and optimistic ($\pi_{\mathrm{d},T}(\theta_T) = N(0.1, 0.15^2)$), corresponding to $\gamma_1 = 0.252$, $0.5$, and $0.748$, respectively. The vertical dashed line indicates $c = 0.975$. The horizontal dashed lines in the error-rate and power panels indicate reference levels of $0.025$ and $0.80$, respectively. The trial design assumes a clinical margin of $\delta = 0$, a sample size of $n_T = 74$, variance $\sigma = 1$, and a non-informative analysis prior ($\pi_{\mathrm{a}, T}(\theta_T) = N(0, 10^6)$).}
\label{fig:oc_normal_design_prior_sd_0.15}
\end{figure}

\begin{figure}[H]
    \centering
    \includegraphics[width=0.5\textwidth]{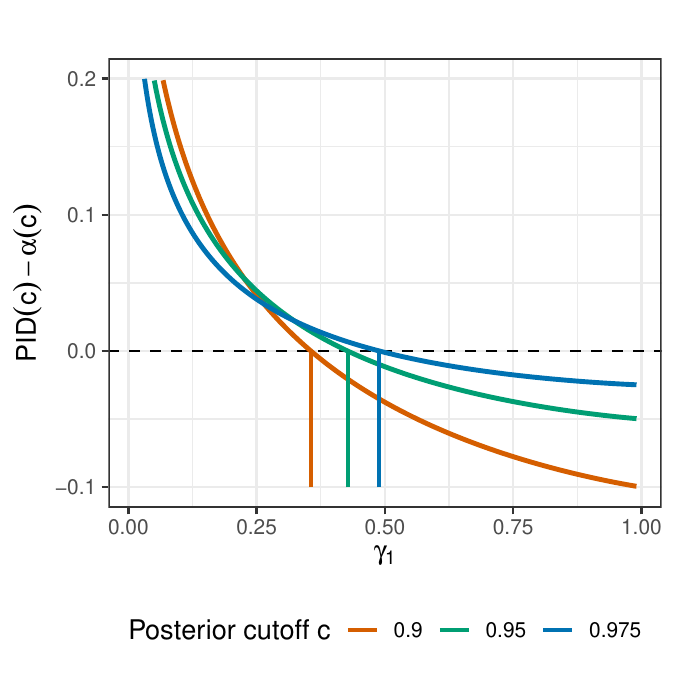}
    \caption{Difference between the probability of an incorrect decision (PID) and the Frequentist Type~I error across posterior probability cutoffs $c = 0.90$, $0.95$, and $0.975$, under design priors with standard deviation equals $0.15$. The x-axis, $\gamma_1$, represents the prevalence of effective trials under the design prior. Vertical dashed lines indicate the threshold values of $\gamma_1$ at which the PID transitions from exceeding to falling below the Type~I error rate.}
    \label{fig:difference_PID_T1E}
\end{figure}

% \end{landscape}

\begin{landscape}
\begin{table}[ht]
\centering
\caption{Operating characteristics for the CULPRIT-SHOCK case study under different PID calibration targets.}
\label{tbl:case_study_calibrated_final}
\renewcommand{\arraystretch}{1.3}
\begin{tabular}{@{} c c c c ccccc @{}}
\toprule
\multirow{2}{*}{\makecell[c]{Calibration \\ Target}} & 
\multirow{2}{*}{\makecell[c]{Analysis Prior \\ $(\pi_{\mathrm{a}}(\theta_T, \theta_C))$}} & 
\multirow{2}{*}{\makecell[c]{Design Prior$^{\dagger}$ \\ $(\pi_{\mathrm{d},T}(\theta_T))$}} & 
\multirow{2}{*}{\makecell[c]{Calibrated \\ $c$}} & 
\multicolumn{5}{c}{Metrics} \\
\cmidrule(lr){5-9}
& & & & FT1E & BT1E & FOR & BCP & BP \\
\midrule

% Benchmark Block (Single Row)
\makecell[c]{FT1E $=0.025$} 
& \makecell[c]{$\pi_{\mathrm{a},T}(\theta_T)=\mathrm{Beta}(1,1)$ \\ $\pi_{\mathrm{a},C}(\theta_C)=\mathrm{Beta}(1,1)$}
& -- 
& 0.975 & 0.025 & -- & -- & -- & -- \\
\midrule

% Block 1 (Single Row)
\makecell[c]{PID $=0.025$} 
& \makecell[c]{$\pi_{\mathrm{a},T}(\theta_T)=\mathrm{Beta}(67,59)$ \\ $\pi_{\mathrm{a},C}(\theta_C)=\mathrm{Beta}(23,16)$}
& \makecell[c]{Historical \\ $\pi_{\mathrm{d},T}(\theta_T)=\mathrm{Beta}(67, 59)$} 
& 0.8145 & 0.218 & 0.061 & 0.341 & 0.832 & 0.633 \\
\midrule

% Block 2 (Three Rows)
& 
& \makecell[c]{Historical \\ $\pi_{\mathrm{d},T}(\theta_T)=\mathrm{Beta}(67, 59)$} & 0.772 & 0.221 & 0.060 & 0.357 & 0.817 & 0.621 \\ \cline{3-9}
\makecell[c]{PID $=0.025$} & 
\makecell[c]{$\pi_{\mathrm{a},T}(\theta_T)=\mathrm{Beta}(1,1)$ \\ $\pi_{\mathrm{a},C}(\theta_C)=\mathrm{Beta}(1,1)$} & 
\makecell[c]{Neutral \\ $\pi_{\mathrm{d},T}(\theta_T)=\mathrm{Beta}(74, 52)$} & 0.898 & 0.102 & 0.017 & 0.290 & 0.620 & 0.327 \\  \cline{3-9}
& 
& \makecell[c]{Pessimistic \\ $\pi_{\mathrm{d},T}(\theta_T)=\mathrm{Beta}(81, 45)$} & 0.954 & 0.046 & 0.004 & 0.187 & 0.410 & 0.117 \\
\midrule

% Block 3 (Three Rows)
& 
& \makecell[c]{Historical \\ $\pi_{\mathrm{d},T}(\theta_T)=\mathrm{Beta}(67, 59)$} & 0.909 & 0.095 & 0.021 & 0.453 & 0.716 & 0.536 \\  \cline{3-9}
\makecell[c]{PID $=0.010$} & 
\makecell[c]{$\pi_{\mathrm{a},T}(\theta_T)=\mathrm{Beta}(1,1)$ \\ $\pi_{\mathrm{a},C}(\theta_C)=\mathrm{Beta}(1,1)$} & 
\makecell[c]{Neutral \\ $\pi_{\mathrm{d},T}(\theta_T)=\mathrm{Beta}(74, 52)$} & 0.960 & 0.040 & 0.005 & 0.345 & 0.505 & 0.262 \\  \cline{3-9}
& 
& \makecell[c]{Pessimistic \\ $\pi_{\mathrm{d},T}(\theta_T)=\mathrm{Beta}(81, 45)$} & 0.983 & 0.017 & 0.001 & 0.211 & 0.309 & 0.087 \\
\bottomrule
\end{tabular}

\vspace{1mm}
\begin{flushleft}
\small \textbf{Note}: FT1E: Frequentist Type I error rate; BT1E: Bayesian Type I error rate; FOR: False omission rate; BCP: Bayesian conditional power; BP: Bayesian power. \\
$^{\dagger}$The design priors displayed in the table correspond to the treatment arm, while the design prior for the control arm is fixed at $\pi_{\mathrm{d},C} = \mathrm{Beta}(23,16)$ across all scenarios.
\end{flushleft}
\end{table}
\end{landscape}

\clearpage
\bibliographystyle{plainnat}

% number 99 determines how many citations can be included in the file (maximum 99)

\newpage

\appendix
% \appendix
\section*{Supplementary Materials}
% Reset counters for the supplement
\setcounter{table}{0}
\setcounter{figure}{0}
\setcounter{equation}{0}
\setcounter{section}{0}

% Add 'S' prefix to all numbers
\renewcommand{\thetable}{S\arabic{table}}
\renewcommand{\thefigure}{S\arabic{figure}}
\renewcommand{\theequation}{S\arabic{equation}}
\renewcommand{\thesection}{S\arabic{section}}

\section{Proofs}

\subsection{Proof of Proposition~\ref{prop:BP_decomp_CP}}

\begin{proof}
By the law of total probability, the marginal probability of meeting the success criterion, defined as Bayesian power $\beta_B(c) = \Pr(\mathcal{S}=1)$, can be partitioned over the two complementary and disjoint states of nature: effectiveness $\mathcal{E} = \{\theta > \delta\}$ and ineffectiveness $\overline{\mathcal{E}} = \{\theta \le \delta\}$. Specifically:
\begin{equation*}
\Pr(\mathcal{S}=1) = \Pr(\mathcal{S}=1 \mid \mathcal{E})\Pr(\mathcal{E}) + \Pr(\mathcal{S}=1 \mid \overline{\mathcal{E}})\Pr(\overline{\mathcal{E}}).
\end{equation*}
Based on the definitions, we have $\beta_C(c) = \Pr(\mathcal{S}=1 \mid \mathcal{E})$ as the Bayesian conditional power and $\alpha_B(c) = \Pr(\mathcal{S}=1 \mid \overline{\mathcal{E}})$ as the Bayesian Type I error rate. Furthermore, the design prior $\pi_{\mathrm{d}}(\theta)$ induces the prior prevalence of effective and ineffective treatments, denoted by $\gamma_1 = \Pr(\mathcal{E})$ and $\gamma_0 = \Pr(\overline{\mathcal{E}})$, respectively. Substituting these terms into the expression above yields the decomposition:
\begin{equation*}
\beta_B(c) = \gamma_1 \beta_C(c) + \gamma_0 \alpha_B(c).
\end{equation*}
In practical clinical trial settings where the design maintains a low false positive risk and high success probability, we typically observe $\alpha_B(c) < \beta_C(c)$. Under this condition, it follows mathematically that $\beta_B(c) \le \beta_C(c)$, with equality $\beta_B(c) \to \beta_C(c)$ occurring only as the prior prevalence of effectiveness $\gamma_1 \to 1$.
\end{proof}

\subsection{Proof of Proposition~\ref{prop:FPR_le_T1E}}

\begin{proof}
The Frequentist Type I error rate $\alpha(c)$ is defined strictly at the point null hypothesis boundary $\theta = \delta$:
\begin{equation*}
\alpha(c) = \Pr\{\mathcal{S}(\mathcal{D};c)=1 \mid \theta = \delta\}.
\end{equation*}
In contrast, the Bayesian Type I error rate $\alpha_B(c)$ is defined as the probability of success conditioned on the treatment being ineffective ($\theta \le \delta$), which averages the success probability over the entire ineffectiveness region according to the design prior $\pi_{\mathrm{d}}(\theta)$ restricted to $\{\theta \le \delta\}$:
\begin{equation*}
\alpha_B(c) = \int_{-\infty}^{\delta} \Pr\{\mathcal{S}(\mathcal{D};c)=1 \mid \theta\} \pi_{\mathrm{d}}(\theta \mid \theta \le \delta) d\theta.
\end{equation*}
Under the standard assumption that the power function $\beta(\theta) = \Pr\{\mathcal{S}(\mathcal{D};c)=1 \mid \theta\}$ is a monotonically increasing function of $\theta$ (i.e., the probability of declaring success increases as the treatment effect improves), it follows that for all $\theta < \delta$, $\beta(\theta) < \beta(\delta) = \alpha(c)$. Consequently:
\begin{equation*}
\alpha_B(c) = \int_{-\infty}^{\delta} \beta(\theta) \pi_{\mathrm{d}}(\theta \mid \theta \le \delta) d\theta < \int_{-\infty}^{\delta} \alpha(c) \pi_{\mathrm{d}}(\theta \mid \theta \le  \delta) d\theta = \alpha(c).
\end{equation*}
The equality $\alpha_B(c) = \alpha(c)$ holds if and only if the design prior $\pi_{\mathrm{d}}(\theta)$ is a point mass concentrated entirely at the boundary point $\theta = \delta$. In all other practical settings where the design prior explores the ineffectiveness region, the Bayesian Type I error rate remains strictly bounded by the Frequentist Type I error rate: $\alpha_B(c) \le \alpha(c)$.
\end{proof}

\subsection{Proof of Proposition~\ref{prop:alpha_asymp}} \label{proofprop3}

\begin{proof}
Under standard regularity conditions such that the Bernstein--von Mises theorem \citep{van2000asymptotic} holds, the posterior distribution of $\theta$ is asymptotically normal. In particular, when the true value is at the boundary $\theta=\delta$,
\[
\theta \mid \mathcal{D}
\approx
N\!\left(\hat{\theta},\, \frac{I(\delta)^{-1}}{n}\right),
\]
where $\hat{\theta}$ is a consistent asymptotically normal estimator of $\theta$, and $I(\delta)$ is the Fisher information at $\theta=\delta$. Therefore,
\[
\Pr(\theta>\delta\mid\mathcal{D})
\approx
1-\Phi\!\left(
\frac{\delta-\hat{\theta}}{\sqrt{I(\delta)^{-1}/n}}
\right)
=
\Phi\!\left(
\frac{\hat{\theta}-\delta}{\sqrt{I(\delta)^{-1}/n}}
\right).
\]
Hence, the Bayesian success rule
\[
\mathcal{S}(\mathcal{D};c)=
\mathbb{I}\!\left\{\Pr(\theta>\delta\mid\mathcal{D})>c\right\}
\]
is asymptotically equivalent to
\[
\mathbb{I}\!\left\{
\frac{\hat{\theta}-\delta}{\sqrt{I(\delta)^{-1}/n}}
> z_c
\right\},
\]
where $z_c=\Phi^{-1}(c)$. Since $\alpha(c)$ is evaluated at $\theta=\delta$, the asymptotic normality of $\hat{\theta}$ implies
\[
\frac{\hat{\theta}-\delta}{\sqrt{I(\delta)^{-1}/n}}
\overset{d}{\to}
N(0,1).
\]
It follows that
\[
\alpha(c)
=
\Pr\{\mathcal{S}(\mathcal{D};c)=1\mid \theta=\delta\}
\to
\Pr(Z>z_c)
=
1-c,
\]
where $Z\sim N(0,1)$.

This result also demonstrates that the Bayesian posterior success criterion (1) calibrated for
Frequentist Type I error is asymptotically equivalent to a Frequentist hypothesis test. Furthermore,
as shown next, under a normal model, these two frameworks are exactly equivalent.

\end{proof}

\subsubsection{Example: Connection to the one-sided $t$-test with a continuous endpoint} \label{sec:normal_example}

Consider a single-arm trial with a continuous endpoint, where outcomes $Y_1, \dots, Y_n \mid \theta, \sigma^2 \overset{\text{iid}}{\sim} N(\theta, \sigma^2)$. We evaluate the success rule $\Pr(\theta > \delta \mid \mathcal{D}) > c$ under the non-informative Jeffreys prior $\pi(\theta, \sigma^2) \propto 1/\sigma^2$. The joint posterior distribution is derived via Bayes' rule as:
\begin{align*}
    \pi(\theta, \sigma^2 \mid \mathcal{D}) 
    &\propto (\sigma^2)^{-(n/2+1)} \exp\left\{ -\frac{1}{2\sigma^2} \sum_{i=1}^n (Y_i - \theta)^2 \right\} \\
    &\propto (\sigma^2)^{-(n/2+1)} \exp\left\{ -\frac{(n-1)S^2 + n(\bar{Y} - \theta)^2}{2\sigma^2} \right\},
\end{align*}
where $\bar{Y} = n^{-1} \sum_{i=1}^n Y_i$ and $S^2 = (n-1)^{-1} \sum_{i=1}^n (Y_i - \bar{Y})^2$ are the sample mean and variance, respectively. 

To draw inference on $\theta$, we marginalize over $\sigma^2$, yielding the Student-$t$ posterior:
\[
\theta \mid \mathcal{D} \sim t_{n-1}\left(\bar{Y}, \frac{S}{\sqrt{n}}\right), \quad \text{or equivalently,} \quad \frac{\theta - \bar{Y}}{S/\sqrt{n}} \mid \mathcal{D} \sim t_{n-1}.
\]
The posterior probability of effectiveness is thus:
\[
\Pr(\theta > \delta \mid \mathcal{D}) = 1 - F_{t_{n-1}}\left( \frac{\delta - \bar{Y}}{S/\sqrt{n}} \right) = F_{t_{n-1}}\left( \frac{\bar{Y} - \delta}{S/\sqrt{n}} \right),
\]
where $F_{t_{n-1}}$ denotes the cumulative distribution function of the Student-$t$ distribution with $n-1$ degrees of freedom. It follows that the Bayesian success rule $\Pr(\theta > \delta \mid \mathcal{D}) > c$ is equivalent to:
\[
\frac{\bar{Y} - \delta}{S/\sqrt{n}} > t_{n-1, c},
\]
where $t_{n-1, c}$ is the $c$-th quantile of the $t_{n-1}$ distribution. This is mathematically identical to the rejection region of the classical one-sided one-sample $t$-test at significance level $\alpha^\ast = 1 - c$.

\subsection{Proof of Theorem~\ref{thm:PID_bound}}

\begin{proof}

Let $\boldsymbol{\theta}$ denote the parameter space for the trial, where $\boldsymbol{\theta} = \theta_T$ for a single-arm design and $\boldsymbol{\theta} = (\theta_T, \theta_C)$ for a two-arm randomized controlled trial. We assume $\boldsymbol{\theta}$ is governed by a design prior $\pi_{\mathrm{d}}(\boldsymbol{\theta})$ and an analysis prior $\pi_{\mathrm{a}}(\boldsymbol{\theta})$, which factorize as $\pi_T(\theta_T)\pi_C(\theta_C)$ in the independent two-arm setting. To derive the bound for $\mathrm{PID}(c)$, we first express the posterior probability of ineffectiveness under the design prior $\pi_{\mathrm{d}}$ as

\begin{align}
\Pr_{\mathrm{d}}(\bar{\mathcal{E}} \mid \mathcal{D})
&=
\frac{f(\mathcal{D} \mid \bar{\mathcal{E}})\Pr_{\mathrm{d}}(\bar{\mathcal{E}})}
     {f(\mathcal{D} \mid \bar{\mathcal{E}})\Pr_{\mathrm{d}}(\bar{\mathcal{E}})
      + f(\mathcal{D} \mid \mathcal{E})\Pr_{\mathrm{d}}(\mathcal{E})} \nonumber \\
&=
\frac{1}{\frac{\Pr_{\mathrm{d}}(\mathcal{E})}{\Pr_{\mathrm{d}}(\bar{\mathcal{E}})}
        \frac{f(\mathcal{D} \mid \mathcal{E})}{f(\mathcal{D} \mid \bar{\mathcal{E}})} + 1}
=
\frac{1}{OE_{\mathrm{d}} \cdot LR(\mathcal{D}) + 1},
\label{eqn:PID_posterior}
\end{align}
where $OE_{\mathrm{d}} = \Pr_{\mathrm{d}}(\mathcal{E}) / \Pr_{\mathrm{d}}(\bar{\mathcal{E}}) = \gamma_1/\gamma_0$ denotes the prior odds of effectiveness under the design prior and
\[
LR(\mathcal{D}) =
\frac{f(\mathcal{D}\mid\mathcal{E})}{f(\mathcal{D}\mid\bar{\mathcal{E}})}
\]
is the likelihood ratio comparing the effective and ineffective states. Under the conditional equivalence assumption stated in Theorem~\ref{thm:PID_bound}, which applies analogously to the joint parameter space such that
\[
\pi_{\mathrm{a}}(\boldsymbol{\theta} \mid \mathcal{E}) = \pi_{\mathrm{d}}(\boldsymbol{\theta} \mid \mathcal{E}),
\qquad
\pi_{\mathrm{a}}(\boldsymbol{\theta} \mid \bar{\mathcal{E}}) = \pi_{\mathrm{d}}(\boldsymbol{\theta} \mid \bar{\mathcal{E}}),
\]
the likelihood ratio can be written equivalently under either prior as
\[
LR(\mathcal{D})
=
\frac{\int_{\mathcal{E}} f(\mathcal{D}\mid\boldsymbol{\theta})\pi_{\mathrm{a}}(\boldsymbol{\theta}\mid\mathcal{E})d\boldsymbol{\theta}}
     {\int_{\bar{\mathcal{E}}} f(\mathcal{D}\mid\boldsymbol{\theta})\pi_{\mathrm{a}}(\boldsymbol{\theta}\mid\bar{\mathcal{E}})d\boldsymbol{\theta}}
=
\frac{\int_{\mathcal{E}} f(\mathcal{D}\mid\boldsymbol{\theta})\pi_{\mathrm{d}}(\boldsymbol{\theta}\mid\mathcal{E})d\boldsymbol{\theta}}
     {\int_{\bar{\mathcal{E}}} f(\mathcal{D}\mid\boldsymbol{\theta})\pi_{\mathrm{d}}(\boldsymbol{\theta}\mid\bar{\mathcal{E}})d\boldsymbol{\theta}}.
\]
Let the success region be defined as $\mathcal{R}_c = \{\mathcal{D} : \mathcal{S}(\mathcal{D};c) = 1\}$, representing the set of all possible datasets for which the trial concludes in success. For any dataset $\mathcal{D} \in \mathcal{R}_c$, the posterior probability of effectiveness exceeds the threshold $c$, and the posterior odds under the analysis prior satisfy:
\[
\frac{\Pr_{\mathrm{a}}(\mathcal{E}\mid\mathcal{D})}
     {\Pr_{\mathrm{a}}(\bar{\mathcal{E}}\mid\mathcal{D})}
=
OE_{\mathrm{a}} \cdot LR(\mathcal{D})
\ge
\frac{c}{1-c},
\]
where $OE_{\mathrm{a}} = \Pr_{\mathrm{a}}(\mathcal{E})/\Pr_{\mathrm{a}}(\bar{\mathcal{E}})$ denotes the prior odds under the analysis prior. This yields the lower bound
\begin{equation}
LR(\mathcal{D}) \ge \frac{1}{OE_{\mathrm{a}}}\frac{c}{1-c}.
\label{eqn:LR_lower_bound}
\end{equation}
Substituting \eqref{eqn:LR_lower_bound} into \eqref{eqn:PID_posterior} and defining the prior odds ratio $R = \frac{OE_{\mathrm{d}}}{OE_{\mathrm{a}}}$, we obtain
\[
\Pr_{\mathrm{d}}(\bar{\mathcal{E}}\mid\mathcal{D})
\le
\frac{1}{R\left(\frac{c}{1-c}\right)+1}.
\]
Finally, by the law of total expectation,
\[
\mathrm{PID}(c) = \Pr_{\mathrm{d}}(\bar{\mathcal{E}} \mid \mathcal{S}=1)
=
\mathbb{E}_{\mathrm{d}}
\left[
\Pr_{\mathrm{d}}(\bar{\mathcal{E}}\mid\mathcal{D})
\mid
\mathcal{D}\in\mathcal{R}_c
\right].
\]
Since $\Pr_{\mathrm{d}}(\bar{\mathcal{E}}\mid\mathcal{D})$ is bounded above by the constant derived above for every dataset in $\mathcal{R}_c$, the same bound holds for the expectation:
\[
\mathrm{PID}(c)
\le
\frac{1}{R\left(\frac{c}{1-c}\right)+1}.
\]
\end{proof}

\subsection{Proof of Corollary~\ref{cor:PID_bound}}
\begin{proof}
Corollary~\ref{cor:PID_bound} considers the specific case where the analysis prior and the design prior are perfectly aligned, such that $\pi_{\mathrm{a}}(\theta) = \pi_{\mathrm{d}}(\theta)$. Under this condition, the prior odds of effectiveness are identical, $OE_{\mathrm{a}} = OE_{\mathrm{d}}$, which implies that the prior odds ratio is $R = 1$. Substituting $R=1$ into the general bound established in Theorem 1 yields:
\begin{equation*}
    \mathrm{PID}(c) \le \frac{1}{1 \cdot \left( \frac{c}{1 - c} \right) + 1} = \frac{1}{\frac{c + 1 - c}{1 - c}} = 1 - c \text{.}
\end{equation*}
This demonstrates that when the analysis and design priors are congruent, the PID is strictly bounded by the complement of the posterior threshold, $1-c$. 
\end{proof}

\subsection{Proof of Theorem~\ref{thm:PID_T1E}}

\begin{proof}
Recall that
\[
\mathrm{PID}(c)
=
\frac{\gamma_0 \alpha_B(c)}
     {\gamma_0 \alpha_B(c) + \gamma_1 \beta_C(c)},
\]
where $\gamma_1 = \Pr(\theta > \delta)$ and $\gamma_0 = 1-\gamma_1$ denote the prior prevalence of effective and ineffective treatments under the design prior, respectively. Under large-sample settings, the posterior distribution is asymptotically normal, and the posterior success rule $\Pr(\theta > \delta \mid D) > c$ is approximately equivalent to a one-sided large-sample test \citep{van2000asymptotic}. Consequently, the corresponding frequentist Type~I error rate satisfies $\alpha(c) \approx 1-c$. Therefore, to characterize when $\mathrm{PID}(c) < \alpha(c)$, it suffices asymptotically to study the condition $\mathrm{PID}(c) < 1-c$. Substituting the expression for $\mathrm{PID}(c)$ yields
\[
\frac{\gamma_0 \alpha_B(c)}
     {\gamma_0 \alpha_B(c) + \gamma_1 \beta_C(c)}
< 1-c.
\]
Since all quantities are positive and $c \in (0,1)$, this inequality is equivalent to
\begin{align*}
\gamma_0 \alpha_B(c)
&<
(1-c)\bigl[\gamma_0 \alpha_B(c) + \gamma_1 \beta_C(c)\bigr] \\
\gamma_0 \alpha_B(c)
&<
(1-c)\gamma_0 \alpha_B(c) + (1-c)\gamma_1 \beta_C(c) \\
\gamma_0 \alpha_B(c) - (1-c)\gamma_0 \alpha_B(c)
&<
(1-c)\gamma_1 \beta_C(c) \\
c\,\gamma_0 \alpha_B(c)
&<
(1-c)\gamma_1 \beta_C(c).
\end{align*}
Rearranging gives
\[
\beta_C(c)
>
\frac{\gamma_0}{\gamma_1}\cdot\frac{c}{1-c}\cdot \alpha_B(c),
\]
or equivalently,
\[
\frac{\beta_C(c)}{\alpha_B(c)}
>
\frac{\gamma_0}{\gamma_1}\cdot\frac{c}{1-c}.
\]
Hence, under large-sample settings, the comparison between $\mathrm{PID}(c)$ and the frequentist Type~I error rate is approximately governed by
\[
\mathrm{PID}(c) < \alpha(c)
\quad \Longleftrightarrow \quad
\frac{\beta_C(c)}{\alpha_B(c)}
>
\frac{\gamma_0}{\gamma_1}\cdot\frac{c}{1-c}.
\]
\end{proof}

\subsection{Proof of Proposition~\ref{prop:PID_T1E_bounds}}

\begin{proof}
Recall that
\[
\mathrm{PID}(c) = \Pr(\theta \le \delta \mid \text{success}),
\]
and therefore, by the definition of probability, $0 \le \mathrm{PID}(c) \le 1$. Under the large-sample setting stated in Proposition~\ref{prop:PID_T1E_bounds}, the frequentist Type I error rate satisfies $\alpha(c) \rightarrow 1-c$.
To establish the upper bound, we note $\mathrm{PID}(c) \le 1$:
\[
\mathrm{PID}(c) - \alpha(c) \le 1 - (1-c) = c.
\]
To establish the lower bound, we note $\mathrm{PID}(c) \ge 0$:
\[
\mathrm{PID}(c) - \alpha(c) \ge 0 - (1-c) = -(1-c).
\]
Combining the two inequalities yields
\[
-(1-c) \le \mathrm{PID}(c) - \alpha(c) \le c.
\]
Now consider the limits with respect to the design prior probability of effectiveness, denoted by $\gamma_1 = \Pr_{\mathrm{d}}(\theta > \delta)$ and $\gamma_0 = 1-\gamma_1$. The PID can be written as
\[
\mathrm{PID}(c)
=
\frac{\gamma_0 \alpha_B(c)}
     {\gamma_0 \alpha_B(c) + \gamma_1 \beta_C(c)},
\]
where $\alpha_B(c)$ is the Bayesian Type I error rate and $\beta_C(c)$ is the Bayesian conditional power. Under standard regularity conditions, assume $\alpha_B(c) > 0$ and $\beta_C(c) > 0$ for all $c \in (0,1)$.
First, as $\gamma_1 \to 1$ (so that $\gamma_0 \to 0$), the numerator converges to $0$ while the denominator converges to $\beta_C(c) > 0$. Hence,
\[
\lim_{\gamma_1 \to 1} \mathrm{PID}(c) = 0.
\]
Therefore,
\[
\lim_{\gamma_1 \to 1} \bigl[\mathrm{PID}(c) - \alpha(c)\bigr]
=
0 - (1-c)
=
-(1-c).
\]
Second, as $\gamma_1 \to 0$ (so that $\gamma_0 \to 1$), the term $\gamma_1 \beta_C(c)$ vanishes, and thus
\[
\lim_{\gamma_1 \to 0} \mathrm{PID}(c)
=
\frac{1\cdot \alpha_B(c)}{1\cdot \alpha_B(c) + 0}
=
1.
\]
Therefore,
\[
\lim_{\gamma_1 \to 0} \bigl[\mathrm{PID}(c) - \alpha(c)\bigr]
=
1 - (1-c)
=
c.
\]
\end{proof}

\newpage
\section{Analytical derivation of operating characteristics for different types of endpoints}

In this section, we first specify the commonly used estimands for each endpoint. Then, we derive analytical expressions for the Bayesian operating characteristic metrics for continuous and binary endpoints. For other non-normal endpoint, such as time-to-event outcomes, we show that the associated test statistics can be embedded within the same continuous framework, as their sampling distributions can be well approximated by a normal distribution via the Central Limit Theorem.

\subsection{Continuous endpoint} \label{sec:continuous_endpoint}

\subsubsection{Analytical derivation of operating characteristics}

For continuous endpoints, we consider both single-arm and two-arm trial settings under a unified framework. Let $\theta$ denote the estimand of interest. In a single-arm trial, we assume $Y_{T,i} \stackrel{\text{i.i.d.}}{\sim} \mathcal{N}(\theta_T,\sigma_T^2)$ for $i=1,\dots,n_T$, where $\theta=\theta_T$ denotes the mean outcome in the treatment arm. In a two-arm trial, we define the estimand as $\theta=\theta_T-\theta_C$, where $\theta_T$ and $\theta_C$ denote the mean outcomes in the treatment and control arms, respectively. Under the normality assumption, the estimator $\bar{Y}$ satisfies 
\begin{equation} \label{eqn:Y_bar}
    \bar{Y} \sim \mathcal{N}(\theta, v^2)
\end{equation} 
where in a single-arm trial $\bar{Y}=\bar{Y}_T$ and $v^2=\sigma_T^2/n_T$, while in a two-arm randomized trial $\bar{Y}=\bar{Y}_T-\bar{Y}_C$ and $v^2=\sigma_T^2/n_T+\sigma_C^2/n_C$.

Assuming the analysis prior $\pi_{\mathrm{a}}(\theta) = N(\theta_{\mathrm{a}}, \sigma^2_{\mathrm{a}})$, by standard normal-normal conjugacy, the posterior distribution is exactly normal:
\begin{equation*}
\theta \mid \mathcal{D} \sim \mathcal{N}(\mu_{\mathrm{post}}, \sigma_{\mathrm{post}}^2),
\end{equation*}
where the posterior variance and mean are respectively given by:
\begin{equation*}
\sigma_{\mathrm{post}}^2 = \left( \frac{1}{\sigma_{\mathrm{a}}^2} + \frac{1}{v^2} \right)^{-1}, \qquad \mu_{\mathrm{post}} = w \bar{Y} + (1-w) \theta_{\mathrm{a}},
\end{equation*}
with $w = \sigma_{\mathrm{post}}^2 / v^2$ representing the precision weight afforded to the observed data. (Note that as the analysis prior becomes non-informative, $\sigma_{\mathrm{a}}^2 \to \infty$, we recover $w \to 1$, $\mu_{\mathrm{post}} \to \bar{Y}$, and $\sigma_{\mathrm{post}}^2 \to v^2$).

Conditional on a prespecified clinical margin $\delta$ and a posterior probability threshold $c\in(0,1)$, the Bayesian success criterion in \eqref{eqn:decision_rule} evaluates to:
\begin{equation*}
\Pr(\theta>\delta \mid \text{data})
= \Phi\!\left(\frac{\mu_{\mathrm{post}} - \delta}{\sigma_{\mathrm{post}}}\right) > c,
\end{equation*}
where $\Phi(\cdot)$ denotes the standard normal cumulative distribution function. By isolating $\bar{Y}$, the success indicator \eqref{eqn:decision_indicator} reduces to a deterministic boundary on the sample mean:
\begin{equation*}
\mathcal{S}(\text{data};c)
= \mathbb{I}\!\left\{\bar{Y} > y_c \right\}, \quad y_c = \frac{1}{w} \left[ \delta - (1-w)\theta_{\mathrm{a}} + \Phi^{-1}(c)\sigma_{\mathrm{post}} \right].
\end{equation*}

To evaluate operating characteristics, we specify the design prior as
$\pi_{\mathrm{d}}(\theta)=\mathcal{N}(\theta_{\mathrm{d}},\sigma_{\mathrm{d}}^2)$. This prior represents pre-trial beliefs about the true data-generating parameter $\theta$ and is distinct from the analysis prior used for inference for observed data alone. Under the joint model defined by the design prior and the sampling distribution
$\bar{Y}\mid\theta \sim \mathcal{N}(\theta,v^2)$,
the pair $(\theta,\bar{Y})$ follows a bivariate normal distribution:
\[
\begin{pmatrix}
\theta \\
\bar{Y}
\end{pmatrix}
\sim
\mathcal{N}_2\!\left(
\begin{pmatrix}
\theta_{\mathrm{d}} \\
\theta_{\mathrm{d}}
\end{pmatrix},
\begin{pmatrix}
\sigma_{\mathrm{d}}^2 & \sigma_{\mathrm{d}}^2 \\
\sigma_{\mathrm{d}}^2 & \sigma_{\mathrm{d}}^2 + v^2
\end{pmatrix}
\right).
\]
We define the standardized random variables
\[
U=\frac{\theta-\theta_{\mathrm{d}}}{\sigma_{\mathrm{d}}}, 
\qquad 
V=\frac{\bar{Y} -\theta_{\mathrm{d}}}{\sqrt{\sigma_{\mathrm{d}}^2+ v^2}},
\]
which are jointly standard normal with correlation
\[
\rho=\mathrm{Corr}(\theta,\bar{Y})
=\frac{\sigma_{\mathrm{d}}}{\sqrt{\sigma_{\mathrm{d}}^2+v^2}}.
\]
Let the standardized boundaries be defined as
\[
a=\frac{\delta-\theta_{\mathrm{d}}}{\sigma_{\mathrm{d}}},
\qquad
b=\frac{y_c  -\theta_{\mathrm{d}}}{\sqrt{\sigma_{\mathrm{d}}^2+v^2}}.
\]
Then the effectiveness event $\mathcal{E} = \mathbb{I}\{ \theta > \delta \}$ can be equivalently written as $\mathbb{I}\{U>a\}$, while the success indicator $\mathcal{S}(\text{data}; c) = \mathbb{I}\!\left\{\bar{Y} > y_c \right\}$ is also equivalent to $\mathbb{I}\{V>b\}.$ Next, we derived the analytical solution for each operating characteristic metric. 

%========================================================
\begin{itemize}
%========================================================

%--------------------------------------------------------
\item \textbf{Bayesian power.}
%--------------------------------------------------------
\begin{equation*}
\boxed{
\Pr(\text{success})
= \Pr(V>b) = 1-\Phi(b).
}
\label{eqn:Bayes_power}
\end{equation*}
%--------------------------------------------------------
\item \textbf{Bayesian conditional power.}
%--------------------------------------------------------
\begin{equation*}
\boxed{
\Pr(\text{success}\mid \text{effective})
=
\frac{\Phi_2(a,b;\rho)}{\Phi(a)}.
}
\end{equation*}

%--------------------------------------------------------
\item \textbf{Bayesian Type I error rate.}
%--------------------------------------------------------
\begin{equation*}
\boxed{
\Pr(\text{success}\mid \text{ineffective}) =  \Pr(V>b\mid U\le a)
= \frac{\Phi(a)-\Phi_2(a,b;\rho)}{\Phi(a)}.
}
\label{eqn:BFPR}
\end{equation*}

%--------------------------------------------------------
\item \textbf{Probability of incorrect decision.}
%--------------------------------------------------------
\begin{equation*}
\boxed{
\Pr(\text{ineffective} \mid \text{success}) = \Pr(U\le a \mid \ V>b) = \frac{\Phi(a)-\Phi_2(a,b;\rho)}{1-\Phi(b)}.
}
\label{eqn:PID}
\end{equation*}

%--------------------------------------------------------
\item \textbf{False omission rate.}
%--------------------------------------------------------
\begin{equation*}
\boxed{
\Pr(\text{effective}\mid \text{failure}) = \Pr(U>a \mid V\le b)
= \frac{\Phi(b)-\Phi_2(a,b;\rho)}{\Phi(b)}.
}
\label{eqn:FOR}
\end{equation*}

%--------------------------------------------------------
\item \textbf{Frequentist Type I error rate.}
%--------------------------------------------------------
\begin{equation*}
\boxed{
\alpha(c)
= \Pr(\text{success}\mid \theta=\delta)= 1 - \Phi(\frac{y_c - \delta}{v}).
}
\label{eq:cox_type1_closed}
\end{equation*}

%========================================================
\end{itemize}
%========================================================

Although this analytical framework is developed for continuous endpoints, it can be readily extended to other non-normal endpoints, such as time-to-event outcomes, since their associated test statistics are approximately normally distributed under the Central Limit Theorem when the sample size is moderately large. Once the variance term in Equation~\eqref{eqn:Y_bar} is specified, the analytical solutions for the operating characteristic metrics apply equivalently. This condition is typically satisfied in confirmatory trials, which are the primary focus of this work. In the Section~\ref{sec:TTE}, we illustrate how time-to-event endpoints can be incorporated within the same framework through their corresponding large-sample normal approximations.

\subsubsection{Additional results for operating characteristics}

In this subsection, we evaluate how the dispersion of the design prior impacts the operating characteristics for continuous endpoints. Specifically, we vary the prior standard deviation, $\sigma_{\mathrm{d}} \in \{0.10, 0.15, 0.20\}$, while maintaining a neutral design prior centered at a mean of $E_{\mathrm{d}}(\theta) = 0$. The corresponding operating characteristics are presented in Figure~\ref{fig:oc_normal_sigmad}. Varying $\sigma_{\mathrm{d}}$ primarily affects the strength of the influence from the design prior on the operating characteristics. When the design prior is more concentrated ($\sigma_{\mathrm{d}}=0.10$), a larger portion of the prior probability mass is concentrated near the clinical margin, making it more difficult to distinguish effective from ineffective treatments. As a result, error-related metrics such as $\mathrm{PID}(c)$ and $\mathrm{FOR}(c)$ increase, while Bayesian conditional power and Bayesian power decrease across the range of posterior cutoffs. As the prior becomes more diffuse ($\sigma_{\mathrm{d}}=0.20$), less probability mass is concentrated near the clinical margin, so trials are more likely to generate data corresponding to clearly effective or clearly ineffective effects. Consequently, the operating characteristics move closer to those driven primarily by the sampling model, with modest reductions in error rates and slight improvements in power. Overall, these results illustrate that the design-prior standard deviation $\sigma_{\mathrm{d}}$ governs the degree of prior influence: smaller values concentrate probability mass near the clinical margin and inflate error rates while reducing power, whereas larger values yield behavior closer to that determined by the sampling model.

\begin{figure}[htbp]
\centering
\includegraphics[width=1\linewidth]{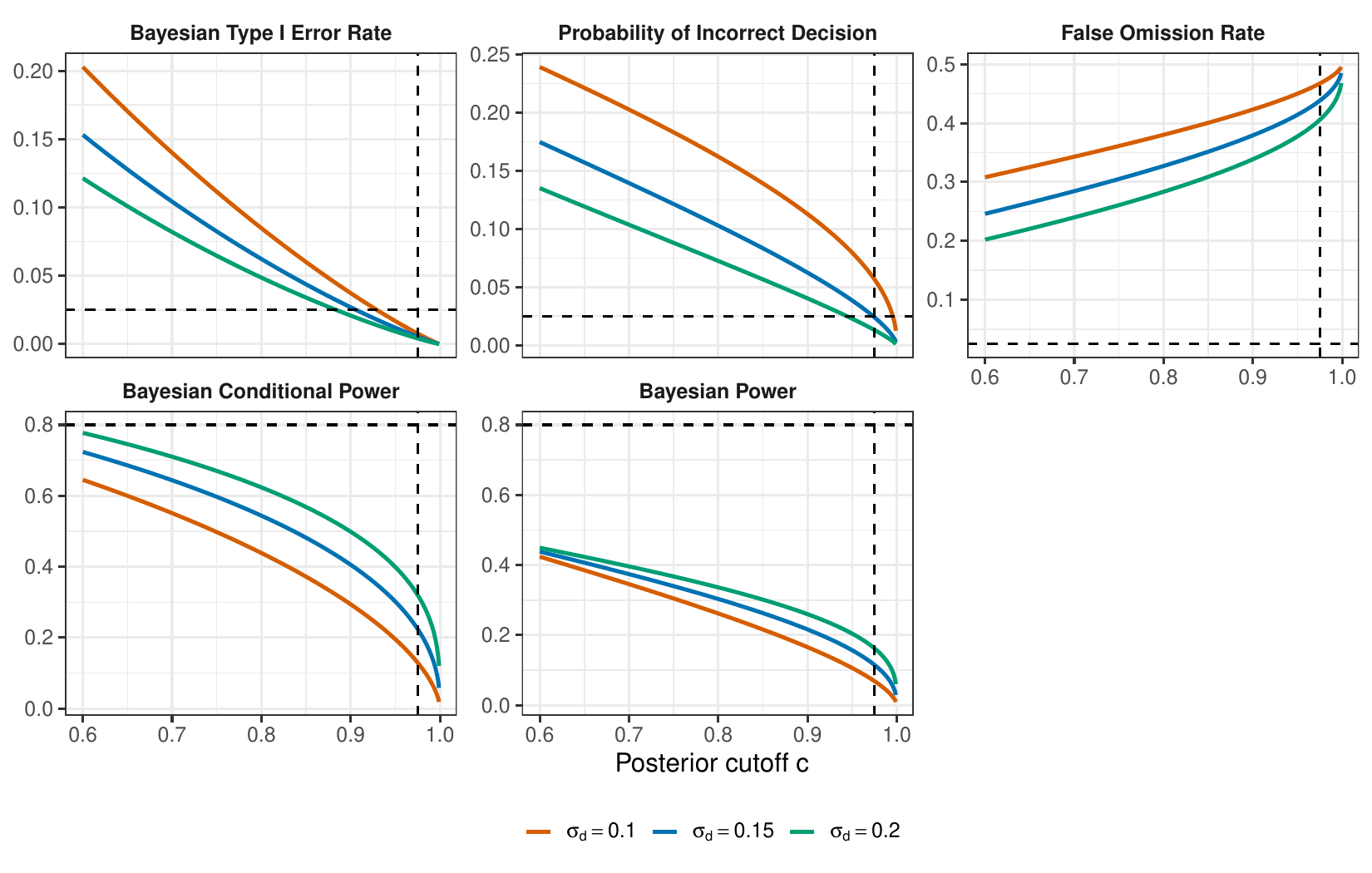}
\caption{Operating characteristics for single-arm continuous endpoints under a neutral design prior ($N(0, \sigma^2_{\mathrm{d}})$), evaluated for design-prior standard deviations $\sigma_{\mathrm{d}} = 0.10$, $0.15$, and $0.20$ (different colors). The vertical dashed line marks $c = 0.975$. Horizontal dashed lines indicate reference levels of $0.025$ (error-rate panels) and $0.80$ (power panels). The trial design assumes a clinical margin of $\delta = 0$, a sample size of $n_T = 74$, variance $\sigma = 1$, and a non-informative analysis prior ($\pi_{\mathrm{a},T}(\theta_T) = N(0,10^6)$).}
\label{fig:oc_normal_sigmad}
\end{figure}

\subsection{Binary endpoint}
\label{sec:Binary_exact}
\subsubsection{Single-arm trial design}
In this subsection, we derive analytical operating characteristic formulas for a single-arm binary endpoint. Let $Y_{T,i}\in\{0,1\}$ denote the binary outcome for patient $i=1,\dots,n_T$, and assume $Y_{T,i}\mid \theta \stackrel{\text{i.i.d.}}{\sim} \mathrm{Bernoulli}(\theta)$, where $\theta \in(0,1)$ denotes the response probability in the treatment arm, which also serves as the estimand of interest. Let $X=\sum_{i=1}^{n_T} Y_{T,i}$ denote the total number of responders, so that $X\mid \theta \sim \mathrm{Binomial}(n_T,\theta)$.

Suppose the design prior is $\pi_{\mathrm{d}}(\theta) = \mathrm{Beta}(a_{\mathrm{d}},b_{\mathrm{d}})$, while the analysis prior used for posterior decision-making is $\pi_{\mathrm{a}}(\theta) = \mathrm{Beta}(a_{\mathrm{a}},b_{\mathrm{a}})$. Under the Beta--Binomial conjugate model, the posterior under the analysis prior is
\[
\theta\mid X=x \sim \mathrm{Beta}(a_{\mathrm{a}}+x,\; b_{\mathrm{a}}+n_T-x).
\]
Hence, the posterior probability of effectiveness is
\[
\Pr_{\mathrm{a}}(\theta>\delta \mid X=x)
=
1-I_{\delta}(a_{\mathrm{a}}+x,\; b_{\mathrm{a}}+n_T-x),
\]
where $I_{\delta}(a,b)$ denotes the regularized incomplete beta function. The Bayesian success rule declares success if
\[
\Pr_{\mathrm{a}}(\theta>\delta\mid X=x)>c.
\]
Equivalently, define the critical count
\[
x_c = \min\Bigl\{x\in\{0,\dots,n_T\}: 1-I_{\delta}(a_{\mathrm{a}}+x,\; b_{\mathrm{a}}+n_T-x)>c\Bigr\},
\]
with the convention $x_c=n_T+1$ if the inequality is never satisfied. Then the trial declares success if and only if $X\ge x_c$.

\paragraph{Predictive distribution} To evaluate the operating characteristics, integrating out $\theta$ under the design prior yields the Beta--Binomial predictive distribution
\[
\Pr(X=x)
=
\binom{n_T}{x}\frac{B(a_{\mathrm{d}}+x,\; b_{\mathrm{d}}+n_T-x)}{B(a_{\mathrm{d}},b_{\mathrm{d}})},
\qquad x=0,\dots,n_T,
\]
where $B(\cdot,\cdot)$ denotes the beta function. Let
\[
m_{\mathrm{d}}(x)
=
\binom{n_T}{x}\frac{B(a_{\mathrm{d}}+x,\; b_{\mathrm{d}}+n_T-x)}{B(a_{\mathrm{d}},b_{\mathrm{d}})}.
\]
The prior prevalence of effective treatments under the design prior is
\[
\gamma_1=\Pr_{\mathrm{d}}(\theta>\delta)=1-I_{\delta}(a_{\mathrm{d}},b_{\mathrm{d}}),
\qquad
\gamma_0=\Pr_{\mathrm{d}}(\theta\le\delta)=I_{\delta}(a_{\mathrm{d}},b_{\mathrm{d}}).
\]
We let
\[
q_{\mathrm{d}}(x)
:=
\Pr_{\mathrm{d}}(\theta>\delta\mid X=x)
=
1-I_{\delta}(a_{\mathrm{d}}+x,\; b_{\mathrm{d}}+n_T-x).
\]
Therefore, the joint probabilities of $(X=x,\mathcal{E})$ and $(X=x,\bar{\mathcal{E}})$ are
\[
\Pr(X=x,\mathcal{E}) = m_{\mathrm{d}}(x)\,q_{\mathrm{d}}(x),
\qquad
\Pr(X=x,\bar{\mathcal{E}})=m_{\mathrm{d}}(x)\,[1-q_{\mathrm{d}}(x)].
\]
Using the success rule $X\ge x_c$, the analytical expression for operating characteristics is obtained as follows.

\paragraph{Operating characteristics} The operating characteristics can be derived as follows:

%========================================================
\begin{itemize}
%========================================================

%--------------------------------------------------------
\item \textbf{Bayesian power.}
%--------------------------------------------------------
\[\boxed{
\mathrm{BP}(c)
=
\Pr(X\ge x_c)
=
\sum_{x=x_c}^{n_T} m_{\mathrm{d}}(x).}
\]
%--------------------------------------------------------
\item \textbf{Bayesian conditional power.}
%--------------------------------------------------------
\[\boxed{
\beta_C(c)
=
\Pr(X\ge x_c \mid \theta>\delta)
=
\frac{\sum_{x=x_c}^{n_T} m_{\mathrm{d}}(x)\,q_{\mathrm{d}}(x)}{\gamma_1}.}
\]
%--------------------------------------------------------
\item \textbf{Bayesian type I error rate.}
%--------------------------------------------------------
\[\boxed{
\alpha_B(c)
=
\Pr(X\ge x_c \mid \theta\le\delta)
=
\frac{\sum_{x=x_c}^{n_T} m_{\mathrm{d}}(x)\,[1-q_{\mathrm{d}}(x)]}{\gamma_0}.}
\]
%--------------------------------------------------------
\item \textbf{Probability of incorrect decision.}
%--------------------------------------------------------
\[\boxed{
\mathrm{PID}(c)
=
\Pr(\theta\le\delta \mid X\ge x_c)
=
\frac{\sum_{x=x_c}^{n_T} m_{\mathrm{d}}(x)\,[1-q_{\mathrm{d}}(x)]}{\mathrm{BP}(c)}.}
\]
%--------------------------------------------------------
\item \textbf{False omission rate.}
%--------------------------------------------------------
\[\boxed{
\mathrm{FOR}(c)
=
\Pr(\theta>\delta \mid X<x_c)
=
\frac{\sum_{x=0}^{x_c-1} m_{\mathrm{d}}(x)\,q_{\mathrm{d}}(x)}{1-\mathrm{BP}(c)}.}
\]
%--------------------------------------------------------
\item \textbf{Frequentist Type I error rate.}
%--------------------------------------------------------
\[\boxed{
\alpha(c)
=
\Pr(X\ge x_c \mid \theta=\delta)
=
\sum_{x=x_c}^{n_T} \binom{n_T}{x}\delta^x(1-\delta)^{n_T-x}.}
\]
\end{itemize}

Thus, all operating characteristics for the single-arm binary endpoint admit exact closed-form expressions under the Beta--Binomial framework, with the only numerical step being the determination of the critical count $x_c$ from the posterior cutoff $c$.

\begin{figure}[htbp]
    \centering
   \includegraphics[width=\linewidth]{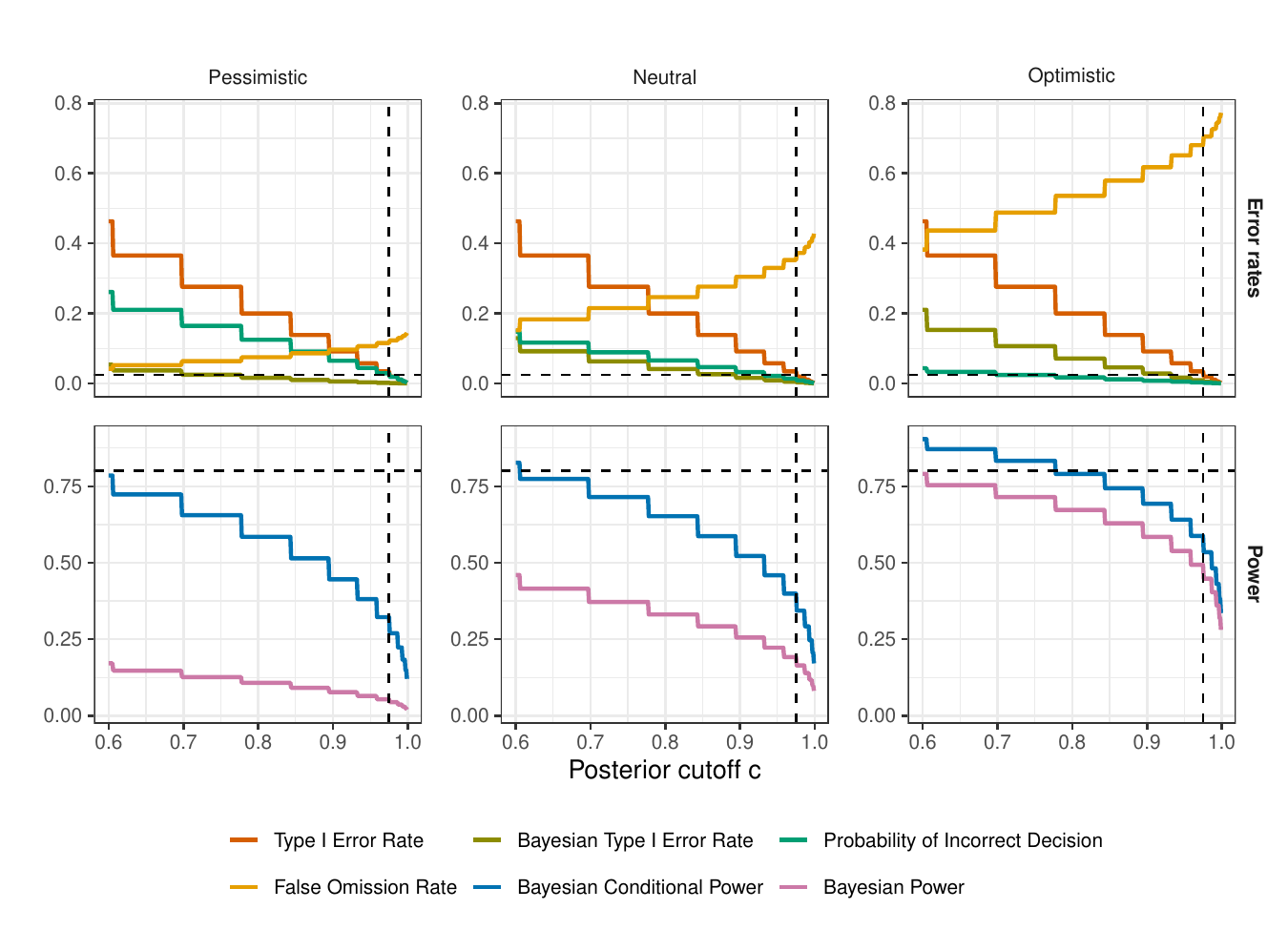}
    \caption{Operating characteristics for single-arm binary endpoints across posterior probability cutoffs $c$, evaluated under three design priors: pessimistic ($\pi_{\mathrm{d}, T}(\theta_T) = \text{Beta}(3,12)$), neutral ($\pi_{\mathrm{d}, T}(\theta_T) = \text{Beta}(6,14)$), and optimistic ($\pi_{\mathrm{d}, T}(\theta_T) = \text{Beta}(9,14)$), corresponding to $\gamma_1 = 0.257$, $0.443$, and $0.652$, respectively. The vertical dashed line marks $c = 0.975$, and the horizontal dashed lines denote 0.025 in the error-rate panel and 0.80 in the power panel. The trial design assumes a clinical margin of $\delta = 0.3$, a sample size of $n_T = 74$, and a non-informative analysis prior ($\pi_{\mathrm{a}, T}(\theta_T) = \text{Beta}(1,1)$).}
    \label{fig:oc_binary_sd_0.1}
\end{figure}
\vspace{0.5em}

Figure~\ref{fig:oc_binary_sd_0.1} presents the resulting operating characteristics evaluated under the exact Beta--Binomial model for the pessimistic, neutral, and optimistic design scenarios. As expected, increasing the posterior probability threshold $c$ imposes a stricter criterion for trial success, leading to monotonically decreasing trends across both error-related metrics (such as the Type I error rates and PID) and power-related metrics. The figure also highlights the strong influence of the design prior, with the optimistic scenario demonstrating substantially higher power, but correspondingly higher error rates, compared to the pessimistic setting at any given threshold.

Finally, the stepwise appearance of the curves reflects the discreteness of the Binomial data. Because the success rule depends on the integer count of responses $X$, the critical value $x_c$ changes only at discrete points as the posterior cutoff $c$ varies. Consequently, the operating characteristics change in a piecewise-constant manner, producing the distinctive staircase patterns observed in Figure~\ref{fig:oc_binary_sd_0.1}.

\subsubsection{Two-arm randomized trial design}

We now extend the operating characteristic formulas to a two-arm randomized trial with binary endpoints. Let $X_T$ and $X_C$ denote the number of events in the treatment and control arms, with sample sizes $n_T$ and $n_C$, respectively. We assume the outcomes are independent between arms such that $X_T \mid \theta_T \sim \mathrm{Binomial}(n_T, \theta_T)$ and $X_C \mid \theta_C \sim \mathrm{Binomial}(n_C, \theta_C)$. We assume the estimand of interest is the absolute risk reduction, i.e., $\theta = \theta_T - \theta_C$. Without loss of generality, the effectiveness region is defined as $\mathcal{E} = \{\theta > 0\}$.

We specify independent Beta distributions for the joint design prior, $\pi_{\mathrm{d}}(\theta_T, \theta_C) = \mathrm{Beta}(a_{\mathrm{d},T}, b_{\mathrm{d},T}) \times \mathrm{Beta}(a_{\mathrm{d},C}, b_{\mathrm{d},C})$, and the joint analysis prior, $\pi_{\mathrm{a}}(\theta_T, \theta_C) = \mathrm{Beta}(a_{\mathrm{a},T}, b_{\mathrm{a},T}) \times \mathrm{Beta}(a_{\mathrm{a},C}, b_{\mathrm{a},C})$. 

Under the analysis prior, the posterior probability of effectiveness given the observed data $(x_T, x_C)$ is denoted by $p_{\mathrm{a}}(x_T, x_C) = \Pr_{\mathrm{a}}(\theta > 0 \mid x_T, x_C)$, which can be evaluated via numerical integration of the independent Beta posteriors. The trial declares success if $p_{\mathrm{a}}(x_T, x_C) > c$. Let $\mathbb{I}_{\mathcal{S}}(x_T, x_C)$ denote the binary decision indicator:
\[
\mathbb{I}_{\mathcal{S}}(x_T, x_C) = 
\begin{cases} 
1 & \text{if } p_{\mathrm{a}}(x_T, x_C) > c, \\
0 & \text{otherwise.}
\end{cases}
\]

\paragraph{Predictive distribution}
Because both experimental arms are independent, the joint predictive distribution under the design prior is the product of the marginal Beta-Binomial distributions:
\[
m_{\mathrm{d}}(x_T, x_C) = 
\left[ \binom{n_T}{x_T}\frac{B(a_{\mathrm{d},T}+x_T,\; b_{\mathrm{d},T}+n_T-x_T)}{B(a_{\mathrm{d},T},b_{\mathrm{d},T})} \right]
\times
\left[ \binom{n_C}{x_C}\frac{B(a_{\mathrm{d},C}+x_C,\; b_{\mathrm{d},C}+n_C-x_C)}{B(a_{\mathrm{d},C},b_{\mathrm{d},C})} \right].
\]
The prior prevalence of effectiveness, $\gamma_1 = \Pr_{\mathrm{d}}(\mathcal{E})$, is obtained by integrating the joint design prior over the effective region $\theta_T > \theta_C$ given design prior. Similarly, the design posterior probability of effectiveness for any observed outcome pair is denoted by $q_{\mathrm{d}}(x_T, x_C) = \Pr_{\mathrm{d}}(\theta > 0 \mid x_T, x_C)$.

\paragraph{Operating characteristics} The operating characteristics are obtained as follows:
\begin{itemize}
%--------------------------------------------------------
\item \textbf{Bayesian power.}
%--------------------------------------------------------
\[\boxed{
\mathrm{BP}(c) = \sum_{x_T=0}^{n_T} \sum_{x_C=0}^{n_C} \mathbb{I}_{\mathcal{S}}(x_T, x_C) m_{\mathrm{d}}(x_T, x_C).}
\]
%--------------------------------------------------------
\item \textbf{Bayesian conditional power.}
%--------------------------------------------------------
\[\boxed{
\beta_C(c) = \frac{1}{\gamma_1} \sum_{x_T=0}^{n_T} \sum_{x_C=0}^{n_C} \mathbb{I}_{\mathcal{S}}(x_T, x_C) m_{\mathrm{d}}(x_T, x_C) q_{\mathrm{d}}(x_T, x_C).}
\]
%--------------------------------------------------------
\item \textbf{Bayesian Type I error rate.}
%--------------------------------------------------------
\[\boxed{
\alpha_B(c) = \frac{1}{\gamma_0} \sum_{x_T=0}^{n_T} \sum_{x_C=0}^{n_C} \mathbb{I}_{\mathcal{S}}(x_T, x_C) m_{\mathrm{d}}(x_T, x_C) [1 - q_{\mathrm{d}}(x_T, x_C)].}
\]
%--------------------------------------------------------
\item \textbf{Probability of incorrect decision.}
%--------------------------------------------------------
\[\boxed{
\mathrm{PID}(c) = \frac{\sum_{x_T=0}^{n_T} \sum_{x_C=0}^{n_C} \mathbb{I}_{\mathcal{S}}(x_T, x_C) m_{\mathrm{d}}(x_T, x_C) [1 - q_{\mathrm{d}}(x_T, x_C)]}{\mathrm{BP}(c)}.}
\]
%--------------------------------------------------------
\item \textbf{False omission rate.}
%--------------------------------------------------------
\[
\boxed{
\mathrm{FOR}(c) = \frac{\sum_{x_T=0}^{n_T} \sum_{x_C=0}^{n_C} [1 - \mathbb{I}_{\mathcal{S}}(x_T, x_C)] m_{\mathrm{d}}(x_T, x_C) q_{\mathrm{d}}(x_T, x_C)}{1 - \mathrm{BP}(c)}.}
\]
%--------------------------------------------------------
\item \textbf{Frequentist Type I error rate.}
%--------------------------------------------------------
\[
\boxed{
\alpha(c \mid \theta_0) = \sum_{x_T=0}^{n_T} \sum_{x_C=0}^{n_C} \mathbb{I}_{\mathcal{S}}(x_T, x_C) \binom{n_T}{x_T}\theta_0^{x_T}(1-\theta_0)^{n_T-x_T} \binom{n_C}{x_C}\theta_0^{x_C}(1-\theta_0)^{n_C-x_C},
}
\]
where $\theta_0$ is a fixed point null rate (e.g., the pooled event rate).
\end{itemize}

\begin{landscape}
\begin{table}[ht]
\centering
\caption{Summary of operating characteristic metrics and analytical solutions for binary endpoints under both single-arm and two-arm trial designs.}
\label{tbl:OC_binary_summary}
\renewcommand{\arraystretch}{2.2} 
% \resizebox{\textwidth}{!}{
\begin{tabular}{ccc}
\toprule
& \multicolumn{2}{c}{Analytical Solution} \\
\cmidrule(lr){2-3}
Metric & Single-Arm & Two-Arm \\
\midrule

\shortstack[c]{Bayesian power \\ $\beta_B(c)$} 
& $\displaystyle \sum_{x=x_c}^{n_T} m_{\mathrm{d}}(x)$ 
& $\displaystyle \sum_{x_T=0}^{n_T} \sum_{x_C=0}^{n_C} \mathbb{I}_{\mathcal{S}}(x_T, x_C) m_{\mathrm{d}}(x_T, x_C)$ \\

\shortstack[c]{Bayesian conditional\\  power $\beta_C(c)$} 
& $\displaystyle \frac{1}{\gamma_1} \sum_{x=x_c}^{n_T} m_{\mathrm{d}}(x)\,q_{\mathrm{d}}(x)$ 
& $\displaystyle \frac{1}{\gamma_1} \sum_{x_T=0}^{n_T} \sum_{x_C=0}^{n_C} \mathbb{I}_{\mathcal{S}}(x_T, x_C) m_{\mathrm{d}}(x_T, x_C) q_{\mathrm{d}}(x_T, x_C)$ \\

\shortstack[c]{Bayesian Type I \\  error rate $\alpha_B(c)$} 
& $\displaystyle \frac{1}{\gamma_0} \sum_{x=x_c}^{n_T} m_{\mathrm{d}}(x)\,[1-q_{\mathrm{d}}(x)]$ 
& $\displaystyle \frac{1}{\gamma_0} \sum_{x_T=0}^{n_T} \sum_{x_C=0}^{n_C} \mathbb{I}_{\mathcal{S}}(x_T, x_C) m_{\mathrm{d}}(x_T, x_C) [1 - q_{\mathrm{d}}(x_T, x_C)]$ \\

\shortstack[c]{Probability of incorrect \\  decision $\mathrm{PID}(c)$} 
& $\displaystyle \frac{\sum_{x=x_c}^{n_T} m_{\mathrm{d}}(x)\,[1-q_{\mathrm{d}}(x)]}{\mathrm{BP}(c)}$ 
& $\displaystyle \frac{\sum_{x_T=0}^{n_T} \sum_{x_C=0}^{n_C} \mathbb{I}_{\mathcal{S}}(x_T, x_C) m_{\mathrm{d}}(x_T, x_C) [1 - q_{\mathrm{d}}(x_T, x_C)]}{\mathrm{BP}(c)}$ \\

\shortstack[c]{False omission rate \\ $\mathrm{FOR}(c)$} 
& $\displaystyle \frac{\sum_{x=0}^{x_c-1} m_{\mathrm{d}}(x)\,q_{\mathrm{d}}(x)}{1-\mathrm{BP}(c)}$ 
& $\displaystyle \frac{\sum_{x_T=0}^{n_T} \sum_{x_C=0}^{n_C} [1 - \mathbb{I}_{\mathcal{S}}(x_T, x_C)] m_{\mathrm{d}}(x_T, x_C) q_{\mathrm{d}}(x_T, x_C)}{1 - \mathrm{BP}(c)}$ \\

\shortstack[c]{Frequentist Type I  \\ error rate $\alpha(c)$} 
& $\displaystyle \sum_{x=x_c}^{n_T} \binom{n_T}{x}\delta^x(1-\delta)^{n_T-x}$ 
& $\displaystyle \sum_{x_T=0}^{n_T} \sum_{x_C=0}^{n_C} \mathbb{I}_{\mathcal{S}}(x_T, x_C) \binom{n_T}{x_T}\theta_0^{x_T}(1-\theta_0)^{n_T-x_T} \binom{n_C}{x_C}\theta_0^{x_C}(1-\theta_0)^{n_C-x_C}$ \\

\bottomrule
\end{tabular}
% }
\begin{flushleft}
\small \textbf{Note}: $\mathcal{E}$ denotes the effectiveness event and $\mathcal{S} = 1$ indicates a declaration of trial success. For the single-arm design, $x_c$ denotes the critical count required for success. For the two-arm design, $\mathbb{I}_{\mathcal{S}}(x_T, x_C)$ denotes the binary success indicator evaluated at the observed event grid point $(x_T, x_C)$. The marginal predictive probability under the design prior, $m_{\mathrm{d}}$, and the posterior probability of effectiveness under the design prior, $q_{\mathrm{d}}$, are defined within their respective single-arm and two-arm sections. For the Frequentist Type I error rate in the two-arm setting, $\theta_0$ represents the assumed point null rate (e.g., the pooled event rate).
\end{flushleft}
\end{table}
\end{landscape}

\subsection{Time-to-event endpoint} \label{sec:TTE}

For time-to-event endpoints, we consider two-arm randomized controlled trials. Let $T$ denote the event time and $Z \in \{0,1\}$ the treatment indicator ($Z=1$ for treatment and $Z=0$ for control). We assume a Cox proportional hazards model $h(t \mid Z) = h_0(t)\exp(\theta Z)$, where $h_0(t)$ is an unspecified baseline hazard function and $\theta$ denotes the log-hazard ratio, serving as the estimand of interest. Because smaller values of $\theta$ correspond to greater treatment efficacy, the Bayesian decision rule evaluates whether the posterior probability of the treatment effect being strictly less than the clinical margin $\delta$ exceeds the threshold $c$; that is, trial success is declared if $\Pr(\theta < \delta \mid \mathcal{D}) > c$.

Let $\widehat{\theta}$ denote the partial-likelihood estimator of $\theta$. Under standard regularity conditions and for moderate-to-large sample sizes, the estimator satisfies the large-sample approximation
\[
\widehat{\theta} \sim \mathcal{N}(\theta, v^2),
\]
which follows from the asymptotic normality of the Cox partial-likelihood estimator.

In practice, the sampling variance $v^2$ depends primarily on the total number of observed events. For a two-arm trial with allocation proportion $r$ to the treatment arm and expected total number of events $D$, a common design-stage approximation is
\[
v^2 \approx \frac{1}{D\,r(1-r)}.
\]
During operating characteristic evaluation, we therefore specify a design-stage variance $v^2$ based on the planned number of events and allocation ratio. This allows the time-to-event setting to fit within the same normal framework presented in Equation~\eqref{eqn:Y_bar}, enabling closed-form evaluation of the corresponding posterior quantities and operating characteristics.

To illustrate this framework, we evaluate the operating characteristics for a time-to-event endpoint using the normal approximation to the log-hazard-ratio estimator. Specifically, the estimator of the treatment effect, $\widehat{\theta}=\log(\mathrm{HR})$, is assumed to follow an approximate normal distribution with variance estimated using the Schoenfeld approximation \citep{schoenfeld1983sample}, $\mathrm{Var}(\widehat{\theta})\approx 4/D$, corresponding to a balanced treatment allocation ($r=0.5$). The total number of observed events is fixed at $D=100$. The clinical margin is set to $\delta=0$ on the log-hazard-ratio scale. 

Importantly, unlike the previous binary case study where priors were assigned independently to the treatment and control arms, here we place the design and analysis priors directly on the relative treatment effect $\theta$. Design priors for $\theta$ are specified as Normal distributions with a common standard deviation $\sigma_{\mathrm{d}}=0.25$ and location parameters $\theta_{\mathrm{d}} \in \{-0.1, 0, 0.1\}$, representing optimistic, neutral, and pessimistic scenarios, respectively. Under each prior, the design-stage prevalence of effective trials is defined as $\gamma_1=\Pr_{\mathrm{d}}(\theta<0)$, corresponding to a true hazard ratio of less than one. These prevalence values decrease monotonically from the optimistic to the pessimistic setting. The resulting operating characteristics exhibit the expected monotone behavior in the cutoff $c$: as the posterior threshold becomes more stringent, error-rate metrics (with the exception of the FOR) decrease, while power-related metrics decline accordingly.

\begin{figure}[htbp]
\centering
\includegraphics[width=\linewidth]{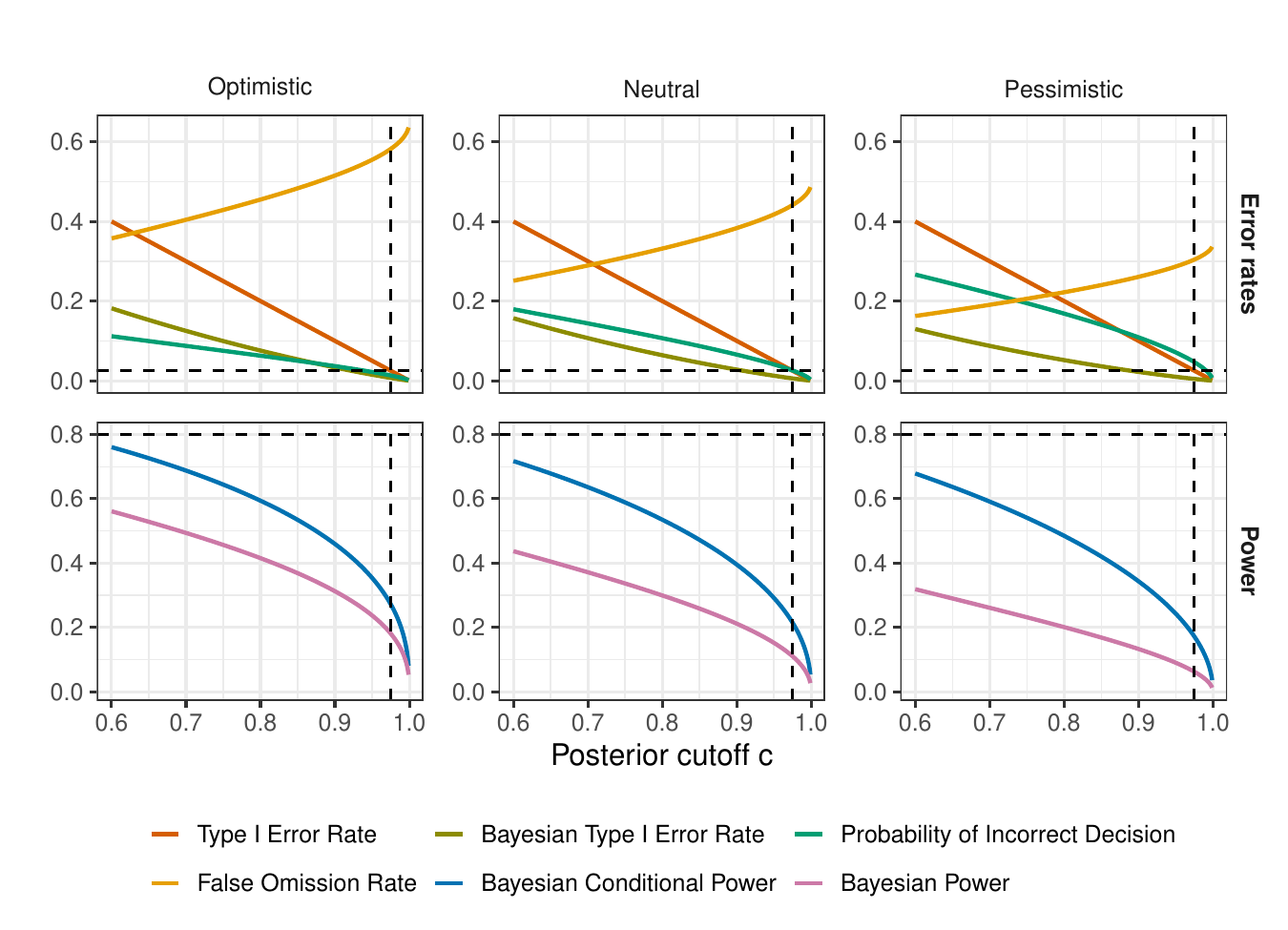}
\caption{Operating characteristics for two-arm time-to-event endpoints across posterior probability cutoffs $c$, evaluated under three design priors: optimistic ($\pi_{\mathrm{d}}(\theta) = N(-0.1, 0.25^2)$), neutral ($\pi_{\mathrm{d}}(\theta) = N(0, 0.25^2)$), and pessimistic ($\pi_{\mathrm{d}}(\theta) = N(0.1, 0.25^2)$), corresponding to $\gamma_1 = 0.655$, $0.500$, and $0.345$, respectively. The vertical dashed line marks $c = 0.975$, and the horizontal dashed lines denote 0.025 in the error-rate panel and 0.80 in the power panel. The trial design assumes a clinical margin of $\delta = 0$ and a non-informative analysis prior ($\pi_{\mathrm{a}}(\theta) = N(0, 10^6)$).}
\label{fig:oc_TTE}
\end{figure}

\end{document}